\pdfoutput=1
\documentclass[journal]{IEEEtran}

\usepackage{ifpdf}
\ifCLASSINFOpdf
  \usepackage[pdftex]{graphicx}
  \graphicspath{{figures/}}
  \DeclareGraphicsExtensions{.pdf,.png,.jpeg}
\else
  \usepackage[dvips]{graphicx}
  \DeclareGraphicsExtensions{.eps}
\fi
\usepackage{amsmath,amssymb,amsfonts}
\usepackage{amsthm}
\usepackage{booktabs}
\usepackage{array}
\usepackage{cite}
\usepackage{url}
\usepackage[caption=false,font=footnotesize]{subfig}
\usepackage{tikz}
\usetikzlibrary{arrows.meta,shapes.geometric}
\usepackage[ruled,vlined,linesnumbered]{algorithm2e}
\definecolor{tndgreen}{HTML}{086C46}
\newcommand{\pinov}[1]{\textcolor{tndgreen}{#1}}

\newtheorem{theorem}{Theorem}
\newtheorem{corollary}{Corollary}
\newtheorem{remark}{Remark}

\begin{document}

\title{Physics-Informed Unsupervised Learning for Reliable Beamforming in Cluttered MIMO-ISAC}

\author{
	Yosefine Triwidyastuti, Nghia Thinh Nguyen, and Tri~Nhu~Do
	\thanks{Y.~Triwidyastuti, N.~T.~Nguyen and T.~N.~Do are with the Telecom Neural Detection (TND) Lab, Department of Electrical Engineering,
Polytechnique Montr\'eal, Montr\'eal, QC H3T 1J4, Canada
(e-mail: \{yosefine-2.triwidyastuti, nghia-thinh.nguyen, tri-nhu.do\}@polymtl.ca).}%
}

\maketitle

\begin{abstract}
In this paper, we consider joint beamforming and power allocation for a mobile multiuser MIMO integrated sensing and communication (ISAC) network with clutter, residual self-interference, and outdated channel state information, under per-user rate and sensing signal-to-interference-plus-noise ratio (SINR) constraints that must hold on every realization. The design must be recomputed every block, and distant or weak targets can make the sensing constraint unmeetable. We propose a physics-informed unsupervised learning (PI-UnSL) network that corrects the analytical design, meets the power and norm constraints exactly by construction, and is conditioned on closed-form interference statistics, spanning the whole tradeoff in one pass. We derive a closed-form sensing SINR at the interference-aware combiner, yielding a necessary range--radar-cross-section--power condition and a priority-independent analytic violation floor, so the problem is intermittently infeasible. We then show that a dual targeting below the attainable floor has no fixed point, and calibrate the label-free objective against the measured floor. Numerical results confirm the floor and the bounded calibrated dual, and show that PI-UnSL substantially raises per-realization joint quality of service over the closed-form baseline, delivers more rate than time-division ISAC at matched detection over most of the tradeoff, and exceeds the reliability of an uncalibrated alternating-optimization solver at a fraction of its latency. More broadly, the physical knowledge given to the network, not its architecture, makes such problems learnable without labels: five architectures fail alike from scratch and perform alike when started from the closed-form design and conditioned on its interference statistics.
\end{abstract}

\begin{IEEEkeywords}
Integrated sensing and communication, MIMO, beamforming, power allocation,
Rician fading, mobility, sensing SINR, unsupervised learning, augmented
Lagrangian, feasibility.
\end{IEEEkeywords}

\section{Introduction}
\label{sec:introduction}

Integrated sensing and communication (ISAC) shares spectrum, hardware, and signal resources between wireless communication and environmental sensing~\cite{Hua_TVT_2023,Bouzabia_TVT_2025,Zhao_TSP_2026}. Combined with the spatial degrees of freedom of multiple-input multiple-output (MIMO) arrays, it can improve both functions through joint beamforming~\cite{Liu_TWC_2023,Hua_TWC_2024,Li_TVT_2025}. In multiuser MIMO-ISAC, a dual-function base station (BS) serves several users and illuminates a target on the same time--frequency resource, so the sensing waveform interferes at the users while the data streams return to the radar receiver through the target and the surrounding clutter~\cite{Liu_JPROC_2026}.
MIMO-ISAC is attractive in vehicular and other mobile networks, where the same infrastructure must deliver high-rate links while tracking the range, direction, and velocity of nearby objects~\cite{Bouzabia_TVT_2025,Menanor_TVT_2026}. There, the transmit beams, the receive combiner, and the power split must be redesigned in every coherence block, under a minimum rate for every user and a minimum sensing quality for the target~\cite{Li_TVT_2025}. Two impairments shape this design: clutter returns that compete with the target echo~\cite{Liu_JPROC_2026}, and the residual self-interference (SI) of a full-duplex transceiver that never fully cancels~\cite{Du_COMST_2026}. Moreover, a quality-of-service (QoS) requirement is specified per realization, so the rate and sensing requirements must hold on the same block rather than on average. As Table~\ref{tab:related} shows, most existing designs assume snapshot channels without clutter or SI and report mean metrics, which leaves per-realization reliability under mobility largely unaddressed.

\begin{table*}[!t]
	\centering
	\caption{Positioning Against the Closest Multiuser MIMO-ISAC Beamforming Designs}
	\label{tab:related}
	\footnotesize
	\renewcommand{\arraystretch}{1.2}
	\begin{tabular}{@{}
			>{\raggedright\arraybackslash}p{\dimexpr 0.176\textwidth-1.6\tabcolsep\relax}
			>{\raggedright\arraybackslash}p{\dimexpr 0.213\textwidth-1.6\tabcolsep\relax}
			>{\raggedright\arraybackslash}p{\dimexpr 0.218\textwidth-1.6\tabcolsep\relax}
			>{\raggedright\arraybackslash}p{\dimexpr 0.249\textwidth-1.6\tabcolsep\relax}
			>{\raggedright\arraybackslash}p{\dimexpr 0.144\textwidth-1.6\tabcolsep\relax}@{}}
		\hline
		\textbf{Work} & \textbf{Channel model / CSI} & \textbf{Design method}
		& \textbf{Constraint handling} & \textbf{Reported metric} \\
		\hline

		Iterative optimization
		\cite{Hua_TVT_2023,Hua_TWC_2024,Li_TVT_2025,Zhao_TSP_2026}
		& Snapshot, no clutter (\cite{Li_TVT_2025}: clutter as added noise), no SI, perfect CSI
		& SDR or optimal covariance (\cite{Li_TVT_2025}: fractional programming), per realization
		& Per-user SINR or CRB, power; hard constraints
		& Beampattern gain, CRB--rate; mean \\

		Model-guided learning \cite{Feng_TVT_2026}
		& Snapshot $+$ mobility, no clutter, no SI, implicit CSI$^\dagger$
		& Model-guided network, end to end
		& Power, minimum sensing SNR per target, unit-norm combiners
		& Spectral efficiency, detection; mean \\

		Label-free, expectation constrained \cite{Liu_TWC_2023}
		& Snapshot, no clutter, no SI, local CSI
		& Distributed label-free, primal--dual
		& Expectation constraints $\mathbb{E}[g]\le G$
		& Long-term mean SE \\

		Label-free, power constrained \cite{Dassanayake_WCL_2025}
		& Snapshot, no clutter, SI is negligible, perfect CSI
		& Label-free network, direct mapping
		& Power only, enforced by a projection layer
		& Rate, $P_{\rm D}$, CRB; mean \\

		\textbf{This work}
		& \textbf{Mobile tracks, clutter $+$ SI, aged CSI}
		& Label-free, analytically initialized, one pass
		& Exact normalization; dual calibrated to feasibility floor; CVaR tail
		& \textbf{Per-realization joint QoS} \\
		\hline
		\multicolumn{5}{@{}p{\textwidth}@{}}{\footnotesize
			$^\dagger$Beamformers are learned from raw pilot observations rather than from an
			estimated channel matrix.}
\end{tabular}
\end{table*}

The joint design is nonconvex, and a substantial literature solves it per realization with iterative algorithms, including semidefinite relaxation and optimal covariance structures~\cite{Hua_TVT_2023,Zhao_TSP_2026}, characterizations of the sensing--communication Pareto boundary~\cite{Hua_TWC_2024,Liu_TWC_2025}, and Lagrangian dual transforms for joint transmit--receive design~\cite{Li_TVT_2025}. These solvers deliver high-quality designs but need many iterations in every block. Under mobility, this latency is itself a performance loss. A design computed from channel state information (CSI) acquired at one block is applied to a channel that has already evolved, and performance degrades with the age of that estimate~\cite{Menanor_TVT_2026}. A design is therefore needed that is produced in a single step, before the channel it was computed for has changed.

A neural network (NN) provides such a step, replacing the per-realization optimization with one forward pass of fixed cost. Supervised training is ill-suited, since a nonconvex program supplies only locally optimal labels, at the cost of solving it for every training sample~\cite{Elbir_SPM_2023,Vaezi_Comst_2026}, which motivates unsupervised, label-free training that places the objective and the constraints directly in the loss~\cite{Liu_TWC_2023,Eisen_TSP_2019,Dassanayake_WCL_2025,Triwidyastuti_TVT_2026}. A label-free loss alone, however, does not make a constrained design learnable. Recent ISAC networks therefore retain signal-processing structure, either by unfolding an iterative solver into layers~\cite{Deka_IoT_2026} or by embedding a model-guided beamforming form~\cite{Feng_TVT_2026}. The most informative structure available is the closed-form solution that the physical model itself provides. Starting the network at that solution and supplying closed-form interference statistics as inputs leaves it only a correction to learn. Our earlier work~\cite{Triwidyastuti_TVT_2026} started training close to such a design, but only through a supervised pretraining phase. Two questions therefore remain open: whether the starting point can be reached without any labels, and whether the constrained problem is made learnable by the network architecture or by the physical knowledge it is given.

Constraint handling poses a further problem. Primal--dual and augmented-Lagrangian methods for constrained learning~\cite{Chamon_TIT_2023,Park_AAAI_2023,Eisen_TSP_2019} assume that the constraints are attainable. In MIMO-ISAC this assumption fails. For a distant target or one with a weak radar cross section (RCS), the sensing threshold is unmeetable at every power split, beam, and combiner, and a scatterer aligned with the target caps the sensing SINR at a signal-to-clutter ratio that no increase in power can raise. A dual whose target lies below such a physical limit has no fixed point, so its multiplier diverges. Mean utility conceals these failures, since it is dominated by realizations on which the constraints never bind. Attainability should therefore be measured as a channel property and calibrated for, and designs should be evaluated by how often they satisfy both requirements on the same realization.

In this paper, we propose a physics-informed unsupervised learning solution in which the NN is initialized at a closed-form analytical design (regularized zero-forcing (RZF) for the users, a target-matched beam for sensing, and a leakage-aware combiner at the radar receiver) and learns only a correction to it, rather than learning the mapping from scratch.
The distinctive point is that ``model'' here refers to the physical signal model of the channels, interference, and SINRs, from which the closed-form analytical design is derived, rather than to a statistical model of the data or to the neural network itself.
This model enters the solution three times: as the closed-form starting point, as closed-form interference statistics at the network input, and as the SINR model and the measured violation floor in the loss.
Major contributions of our work can be summarized as:
\begin{itemize}
\item We develop a mobile multiuser MIMO-ISAC model that couples spatially correlated Rician channels on moving tracks with a monostatic target, multiple clutter scatterers, residual SI, and outdated CSI, and that captures the two-way sensing--communication leakage created by superposition. On this model we formulate the joint design of the power split, transmit beams, and radar combiner under per-user rate and sensing-SINR constraints that must hold per realization. To the best of the authors' knowledge, unlike prior designs, which assume snapshot channels without clutter or SI and report mean metrics (Table~\ref{tab:related}), we evaluate designs by per-realization joint QoS.

\item At the interference-aware combiner the sensing SINR admits a closed form, from which we derive the sensing-only maximum and a necessary range--RCS--power condition. The problem is therefore intermittently infeasible, with an analytic violation floor independent of the tradeoff weight. We show that a dual whose target lies below the attainable floor, which the noise-limited bound certifies, has no fixed point, so its multiplier grows without bound. We therefore construct a label-free augmented-Lagrangian objective that charges only the gap above the measured floor, carries multipliers per priority bin, ascends the user duals on a CVaR tail, and caps the multipliers against the utility scale, which keeps the multipliers bounded.

\item We propose a physics-informed unsupervised learning (PI-UnSL) solution, a multitask network that predicts a residual correction to the RZF, target-matched, and interference-aware designs and enforces the power and norm constraints exactly by normalization. Its correction heads are initialized at zero, so the untrained network reproduces the analytical design and no labels are needed at any stage. The network is conditioned on closed-form interference statistics and on the tradeoff weight, so one network traces the full communication--sensing frontier in a single forward pass.

\item From the wireless ISAC perspective, the advantage lies in per-realization reliability rather than mean throughput. PI-UnSL substantially raises the joint-QoS probability over the closed-form baseline at comparable mean utility. It delivers more rate than time-division ISAC at matched detection over most of the tradeoff range, although time division attains a higher mean utility. It also exceeds the joint QoS of an uncalibrated alternating-optimization (AO) solver at a small fraction of its latency, while AO attains higher utility and rate. This margin persists under mobility and, reduced, under outdated CSI.

\item From the unsupervised-optimization perspective, the physical knowledge given to the network, rather than its architecture, makes the constrained problem learnable without labels. Networks of five architectures fail alike from scratch. Starting from the closed-form design provides most of the joint-QoS gain but lowers the utility, and the closed-form interference statistics and their regularizers add joint QoS and restore the utility of the baseline, after which all five architectures perform alike.
\end{itemize}

\textit{Notation:} Vectors are denoted by lowercase letters with an arrow, e.g., $\vec{x}$, and matrices by boldface uppercase letters, e.g., $\mathbf{X}$. $(\cdot)^{\mathsf T}$, $(\cdot)^{\mathsf H}$, $\operatorname{tr}(\cdot)$, $\|\cdot\|$, and $\|\cdot\|_{\mathrm F}$ denote the transpose, conjugate transpose, trace, Euclidean norm, and Frobenius norm. $\odot$ is the element-wise product, $[x]_+=\max\{0,x\}$, and $\mathcal{CN}(\vec{\mu},\boldsymbol\Sigma)$ the circularly symmetric complex Gaussian distribution with mean $\vec{\mu}$ and covariance $\boldsymbol\Sigma$. $\mathbf I_N$ is the $N\times N$ identity matrix, $\mathbb E\{\cdot\}$ the expectation, $\Re\{\cdot\}$ and $\Im\{\cdot\}$ the real and imaginary parts, and $D_{\rm KL}(\cdot\,\|\,\cdot)$ the Kullback--Leibler divergence.

\section{Mobile Multiuser MIMO-ISAC System Model}
\label{sec:system_model}

Let us consider a downlink MIMO-ISAC network with one dual-function base station (BS), $U$ single-antenna users, and one mobile sensing target. The BS has $N_{\rm t}$ transmit and $N_{\rm r}$ co-located receive antennas in a monostatic configuration, where it illuminates the target and receives the echo. Instantaneous communication channel state information (CSI) and sensing-channel parameter estimates are assumed available at the BS in each coherent processing interval (CPI). Fig.~\ref{fig:system} illustrates the network and the role of the proposed PI-UnSL, which sets the transmitter and the radar receiver of the BS in every block.

\begin{figure}[!t]
    \centering
    \resizebox{\columnwidth}{!}{\definecolor{tndblue}{HTML}{0018A8}%
\definecolor{tndpurple}{HTML}{6B3FA0}%
\definecolor{tndgreen}{HTML}{086C46}%
\definecolor{tndgray}{HTML}{6E6E6E}%
\definecolor{tndred}{HTML}{ED1B24}%
\definecolor{tndcyan}{HTML}{00A9CE}%
\definecolor{tndgold}{HTML}{D4AA00}%
\definecolor{tndorange}{HTML}{FF9800}%
\begin{tikzpicture}[
    font=\footnotesize,
    >={Latex[length=1.6mm,width=1.4mm]},
    box/.style={draw, rounded corners=2pt, align=center, line width=0.6pt, inner sep=2.5pt},
    ant/.style={line width=0.8pt},
    ue/.style={draw=black, fill=tndcyan, minimum size=2.8mm, inner sep=0pt, line width=0.5pt},
    tgt/.style={draw=black, fill=tndred, regular polygon, regular polygon sides=3,
                minimum size=4mm, inner sep=0pt, line width=0.5pt},
    clt/.style={circle, fill=tndgray, minimum size=2mm, inner sep=0pt},
    com/.style={->, line width=0.8pt, draw=tndblue},
    sen/.style={->, line width=0.8pt, draw=tndorange},
    ech/.style={->, line width=0.8pt, draw=tndorange, dash pattern=on 3pt off 1.5pt},
    cle/.style={->, line width=0.6pt, draw=tndgray, dash pattern=on 2pt off 1.5pt},
    ctl/.style={->, line width=1.0pt, draw=tndgreen},
    csi/.style={->, line width=0.8pt, draw=tndcyan!75!black, dash pattern=on 3pt off 1.5pt},
    lab/.style={inner sep=1pt},
]

\draw[line width=0.6pt, draw=tndgray, rounded corners=2pt] (0,1.9) rectangle (2.3,4.3);
\node[lab, font=\footnotesize\bfseries, text=tndgray, anchor=north west] at (0.05,4.25) {BS};
\foreach \i in {0,...,3} {
  \draw[ant, tndblue] (2.3,3.5+0.16*\i) -- (2.55,3.5+0.16*\i);
  \fill[tndblue] (2.55,3.5+0.16*\i) circle (0.04);
}
\node[lab, text=tndblue, anchor=east] at (2.25,3.74) {Tx, $N_{\rm t}$};
\foreach \i in {0,...,3} {
  \draw[ant, tndpurple] (2.3,2.05+0.16*\i) -- (2.55,2.05+0.16*\i);
  \fill[tndpurple] (2.55,2.05+0.16*\i) circle (0.04);
}
\node[lab, text=tndpurple, anchor=east] at (2.25,2.29) {Rx, $N_{\rm r}$};
\draw[->, line width=0.7pt, draw=tndred] (2.2,3.45) to[out=215, in=145] (2.2,2.6);
\node[lab, text=tndred, anchor=east] at (1.95,3.02) {$\sqrt{\chi_{\rm SI}}\,\mathbf H_{\rm SI}$};
\node[lab, anchor=south west] at (-0.05,4.35)
  {$\vec{x}[k]=\vec{v}_{\rm s}s_{\rm s}[k]+\textstyle\sum_{u}\vec{v}_u s_u[k]$};

\node[ue] (u1) at (6.2,4.85) {};
\node[ue] (u2) at (7.25,4.35) {};
\node[ue] (uU) at (7.8,3.45) {};
\node[lab, text=tndcyan!75!black, anchor=west] at ([xshift=1pt]u1.east) {UE $1$};
\node[lab, text=tndcyan!75!black, anchor=west] at ([xshift=1pt]u2.east) {UE $2$};
\node[lab, text=tndcyan!75!black, anchor=west] at ([xshift=1pt]uU.east) {UE $U$};
\node[lab, text=tndcyan!75!black, rotate=-60] at (7.62,3.92) {$\cdots$};
\draw[com] (2.6,3.98) -- (u1.west) node[pos=0.55, above, sloped, lab] {$\vec{h}_1$};
\draw[com] (2.6,3.82) -- (u2.west) node[pos=0.62, above, sloped, lab] {$\vec{h}_2$};
\draw[com] (2.6,3.66) -- (uU.west) node[pos=0.65, below, sloped, lab] {$\vec{h}_U$};
\node[lab, anchor=south west] at (4.3,5.2) {$y_u=\vec{h}_u^{\mathsf H}\vec{x}[k]+n_u$};

\node[tgt] (tg) at (6.2,2.35) {};
\node[lab, text=tndred, anchor=north] at (tg.south) {Target};
\draw[->, line width=0.6pt, draw=tndred] (tg.east) -- ++(0.6,0.25)
  node[lab, anchor=west, text=tndred] {$v_{\rm T}$};
\draw[sen] (2.6,3.52) -- (tg.north west)
  node[pos=0.5, above, sloped, lab, text=tndorange!80!black] {$\mathbf H_{\rm T}$};
\draw[ech] (tg.west) -- (2.6,2.5)
  node[pos=0.5, below, sloped, lab, text=tndorange!80!black] {echo};

\node[clt] (c1) at (4.4,0.95) {};
\node[clt] (c2) at (5.7,0.7) {};
\node[clt] (c3) at (7.2,1.05) {};
\draw[cle] (c1) -- (2.6,2.2);
\draw[cle] (c2) -- (2.6,2.15);
\draw[cle] (c3) -- (2.6,2.1);
\node[lab, text=tndgray, anchor=north] at (5.9,0.55) {clutter $\mathbf H_{{\rm cl},\ell}$, $\ell=1,\ldots,L_{\rm cl}$};

\draw[line width=1pt, draw=tndgreen, rounded corners=2pt] (0,0.1) rectangle (2.3,1.0);
\node[lab, align=center] at (1.15,0.55) {\textbf{PI-UnSL}\\one forward pass};
\draw[->, line width=0.8pt] (-0.85,0.55) -- (0,0.55)
  node[pos=0.45, above, lab, align=center] {$\omega$\\$\vec{q}_{\rm sys}$};
\draw[ctl] (0.45,1.0) -- (0.45,1.9)
  node[pos=0.5, left, lab, text=tndgreen, align=right] {$p_{\rm s},p_u$\\$\vec{w}_{\rm s},\vec{w}_u$};
\draw[ctl] (1.55,1.0) -- (1.55,1.9)
  node[pos=0.5, left, lab, text=tndgreen] {$\vec{u}_{\rm s}$};
\draw[csi] (2.3,1.95) -- (3.05,1.95) -- (3.05,0.55) -- (2.3,0.55);
\node[lab, text=tndcyan!75!black, anchor=west] at (3.1,1.05) {CSI $\mathcal X_{\rm in}$};
\end{tikzpicture}%
}
    \caption{Mobile MIMO-ISAC network with clutter and residual SI; PI-UnSL sets the beams, power split, and radar combiner in one pass.}
    \label{fig:system}
\end{figure}

Let $t$ index the coherence block of duration $T_{\rm b}$ and $k$ a sample inside it, within which ranges, angles, and radial velocities are constant so only the Doppler phase evolves. For $i\in\{1, \ldots ,U, {\rm T}\}$, the node position is $\vec{c}_i[t]=\vec{c}_i[t-1] + v_i[t] T_{\rm b} [\cos\theta_i[t],\, \sin\theta_i[t]]^{\mathsf T}$ with ground speed $v_i[t]$ and heading $\theta_i[t]$.
Random-walk, Gauss--Markov, or group-mobility laws may be assigned to $(v_i,\theta_i)$ without changing the physical-layer model.

With the BS at $\vec{c}_{\rm BS}$ and boresight as angular reference, every geometry-dependent parameter follows from range
$d_i[t] = \|\vec{c}_i[t] - \vec{c}_{\rm BS}\|_2$, angle $\phi_i[t] = \angle (\vec{c}_i[t] - \vec{c}_{\rm BS})$, and radial velocity $v_{{\rm r},i}[t] = v_i[t] \cos \vartheta_i[t]$ with $\vartheta_i[t] = \theta_i[t] - \phi_i[t]$, so $v_{{\rm r},i}>0$ on recession. For a half-wavelength uniform linear array of $N$ elements, $\vec{a}_N(\phi)=[1,\,e^{-\mathsf j\pi\sin\phi},\, \ldots,\, e^{-\mathsf j\pi(N-1)\sin\phi}]^{\mathsf T}$~\cite[Eq.~(1.23)]{Bjornson_2017}, with $\vec{a}_{\rm t} = \vec{a}_{N_{\rm t}}$, $\vec{a}_{\rm r} = \vec{a}_{N_{\rm r}}$.
The environment is represented by $L_{\rm cl}$ discrete clutter scatterers, which are passive reflectors rather than receivers and therefore enter only through two-way echo channels. We assume that these $L_{\rm cl}$ scatterers capture the dominant clutter returns and neglect weaker reflectors and diffuse clutter. This is justified by the $r^{-4}$ decay of the two-way path: a $0.01$~m$^2$ reflector at $100$~m returns about $40$~dB below the receiver noise even at full power and full array gain.

\subsection{Superimposed Sensing and Communication Signals}

The BS superimposes one dedicated sensing stream and $U$ data streams on the same block and array~\cite{Bouzabia_TVT_2025}.
Let $s_{\rm s}[k]$ denote a known unit-energy sensing symbol and $s_u[k]$ the unit-energy data symbol intended for user $u$, with $\mathbb E\{s_i[k] s_j^*[k']\} = \delta_{ij} \delta_{kk'}$ for $i,j \in \{{\rm s}, 1, \ldots, U\}$,
the streams are unit-energy, uncorrelated, and temporally white, so interference enters as covariances without cross terms between distinct delays.
For unit-norm beamforming $\vec{w}_{\rm s},\vec{w}_u \in \mathbb C^{N_{\rm t}}$ and powers $p_{\rm s},p_u$, the precoders are $\vec{v}_{\rm s} = \sqrt{p_{\rm s}} \vec{w}_{\rm s}$, $\vec{v}_u = \sqrt{p_u} \vec{w}_u$.
The transmitted MIMO-ISAC signal is $\vec{x}[k] = \vec{v}_{\rm s}s_{\rm s}[k] + \sum_{u=1}^{U} \vec{v}_u s_u[k] \in\mathbb C^{N_{\rm t} \times1}$, with covariance $\mathbf R_x = \mathbb E\{\vec{x} \vec{x}^{\mathsf H}\} = \vec{v}_{\rm s} \vec{v}_{\rm s}^{\mathsf H} + \sum_{u=1}^{U} \vec{v}_u \vec{v}_u^{\mathsf H}$ and total-power constraint $\operatorname{tr} (\mathbf R_x) = p_{\rm s} + \sum_{u=1}^{U} p_u \leq P_{\max}$.

\subsection{Spatially Correlated Rician Channel under Mobility}
\label{subsec:communication_channel}

The direct BS-to-user channel is represented by a spatially correlated Rician block-fading model. For user $u$ in block $t$,
\begin{equation}
\begin{split}
    \vec{h}_u[t]
    =\sqrt{\beta_u[t]}\Big(&
    \sqrt{K_u/(K_u+1)}\,
    e^{\mathsf j\psi_u[t]}\vec{a}_{\rm t}(\phi_u[t])\\
    &+\sqrt{1/(K_u+1)}\,
    \boldsymbol\Sigma_u^{1/2}\vec{g}_u[t]\Big),
    \label{eq:rician_channel}
\end{split}
\end{equation}
with Rician factor $K_u$~\cite{Bouzabia_TVT_2025}, line-of-sight (LoS) departure angle $\phi_u[t]$, LoS phase $\psi_u[t] = -2\pi d_u[t]/\lambda$, $\vec{g}_u[t] \sim\mathcal{CN} (\vec{0}, \mathbf I_{N_{\rm t}})$, and transmit-side correlation $\boldsymbol\Sigma_u$ of unit diagonal with $[\boldsymbol\Sigma_u]_{i,j} = \rho^{|i-j|}$, where $\rho$ is antenna spatial correlation. Since $\|\vec{a}_{\rm t}\|^2 = \operatorname{tr} (\boldsymbol\Sigma_u) = N_{\rm t}$, $\mathbb E\{\|\vec{h}_u[t]\|^2\} = \beta_u[t] N_{\rm t}$, so $\beta_u[t]$ is the per-antenna large-scale gain, where $\beta_u[t] = \beta_0 (d_u[t]/d_0)^{-\alpha_{\rm PL}} 10^{-\zeta_u[t]/10}$, with reference distance $d_0$, free-space gain $\beta_0 = (\lambda/(4\pi d_0))^2$~\cite[Eq.~(2.7)]{Goldsmith2005}, path-loss exponent $\alpha_{\rm PL}$, and $\zeta_u[t]$ is log-normal shadowing in dB. To capture spatial consistency as users move, shadowing evolves according to the Gudmundson model $\zeta_u[t] = a_u[t]\zeta_u[t-1] + \sqrt{1 - a_u^2[t]}\,\xi_u[t]$, where $\xi_u[t] \sim \mathcal{N}(0, \sigma_{\rm sh}^2)$ and $a_u[t] = \exp(-\Delta d_u[t]/d_{\rm cor})$ is the shadowing spatial correlation over the distance $\Delta d_u[t]$ traveled in block $t$~\cite[Eq.~(2.50)]{Goldsmith2005}, with decorrelation distance $d_{\rm cor} = 10$~m.

The diffuse component in \eqref{eq:rician_channel} follows the Clarke--Jakes recursion that can be expressed as
\begin{equation}
    \vec{g}_u[t]
    =\rho_u[t]\vec{g}_u[t-1]
    +\sqrt{1-\rho_u^2[t]}\,\vec{e}_u[t],
    \label{eq:gauss_markov}
\end{equation}
where $\vec{e}_u[t] \sim \mathcal{CN} (\vec{0}, \mathbf I_{N_{\rm t}})$ and $\rho_u[t] = J_0(2\pi f_{m,u}[t] T_{\rm b})$ with maximum Doppler frequency $f_{m,u}[t] = v_u[t]/\lambda$ for zeroth-order Bessel $J_0(\cdot)$ and carrier wavelength $\lambda$~\cite[cf.~Eq.~(3.26)]{Goldsmith2005}.
The recursion is first-order (Jakes at lag one) and preserves $\mathbb E\{\|\vec{g}_u[t]\|^2\} = N_{\rm t}$ since $|J_0(x)|\leq1$.
Because $\psi_u[t]$ tracks the range, a tangentially moving node still decorrelates through \eqref{eq:gauss_markov} with its specular phase nearly fixed while its LoS angle changes fastest. Equations \eqref{eq:rician_channel}--\eqref{eq:gauss_markov} form a stochastic generator requiring no prerecorded data.

\subsection{Monostatic Target, Clutter, and Doppler Model}
\label{subsec:sensing_channel}

With $r_{\rm T}[t] \equiv d_{\rm T}[t]$, the round-trip delay and Doppler frequency are $\tau_{\rm T}[t] = 2r_{\rm T}[t]/c$ and $f_{D,{\rm T}}[t] = -2v_{{\rm r},{\rm T}}[t]/\lambda$ for speed of light $c$.
With $t_k =k T_{\rm s}$ the intra-block sample time, the round-trip phase is $-2\pi f_c\tau_{\rm T}[t] + 2\pi f_{D,{\rm T}}[t] t_k = -4\pi (r_{\rm T}[t] + v_{{\rm r},{\rm T}}[t] t_k)/\lambda$, which is the instantaneous two-way range phase.
The clutter Doppler frequencies $f_{D,{\rm cl},\ell}[t]$ follow the same convention. The narrowband rank-one target channel is
\begin{equation}
    \mathbf H_{\rm T}[t]
    =\alpha_{\rm T}[t]
    \vec{a}_{\rm r}(\phi_{\rm T}[t])
    \vec{a}_{\rm t}^{\mathsf H}(\phi_{\rm T}[t]),
    \label{eq:target_channel}
\end{equation}
where
\begin{equation}
    \alpha_{\rm T}[t]
    =
    \sqrt{G_{\rm t}G_{\rm r}\lambda^2\sigma_{\rm RCS}
    \big/\big[(4\pi)^3r_{\rm T}^4[t]\big]}\,
    e^{-\mathsf j2\pi f_c\tau_{\rm T}[t]}.
    \label{eq:target_gain}
\end{equation}
with per-element gains $G_{\rm t}, G_{\rm r}$, RCS $\sigma_{\rm RCS}$, lognormal across tracks and constant within a track, and carrier frequency $f_c$~\cite[cf.~Eq.~(2.8)]{Richards2014}.
Array gains are already carried by $\|\vec{a}_{\rm t}\|^2 = N_{\rm t}$ and $\|\vec{a}_{\rm r}\|^2 = N_{\rm r}$ in \eqref{eq:target_channel}, so element gains in \eqref{eq:target_gain} avoid double-counting the beamforming gain.
Multipath is represented by $L_{\rm cl}$ scatterers with $\mathbf H_{{\rm cl},\ell}[t] = \alpha_{{\rm cl},\ell}[t] \vec{a}_{\rm r} (\phi_{{\rm cl},\ell}[t]) \vec{a}_{\rm t}^{\mathsf H} (\phi_{{\rm cl},\ell}[t])$, $\alpha_{{\rm cl},\ell} \sim \mathcal{CN}(0, \sigma_{{\rm cl},\ell}^2)$ or from \eqref{eq:target_gain} at a random scatterer range and RCS.
Subscript ${\rm cl}$ denotes clutter, superscript ${\rm c}$ communication.

\subsection{Received Communication Signal and User SINR}

The received signal at communication user $u$ is $y_u[k,t] = \vec{h}_u^{\mathsf H}[t] \vec{v}_u s_u[k] + \sum_{u'\ne u} \vec{h}_u^{\mathsf H}[t] \vec{v}_{u'} s_{u'}[k] + \vec{h}_u^{\mathsf H}[t] \vec{v}_{\rm s} s_{\rm s}[k] + n_u[k]$, with $n_u[k] \sim \mathcal{CN}(0, \sigma_u^2)$.
Although $s_{\rm s}[k]$ is known at the BS, we conservatively assume it is unknown at the users, so each treats every other stream as interference, giving SINR
\begin{equation}
    \gamma_u^{\rm c}[t]
    \!=\! |\vec{h}_u^{\mathsf H}[t] \vec{v}_u|^2 \!\Big/\! \Big( \!
    \textstyle \sum_{u'\ne u}\! |\vec{h}_u^{\mathsf H}[t] \vec{v}_{u'}\!|^2
    \!+ |\vec{h}_u^{\mathsf H}[t] \vec{v}_{\rm s}|^2
    \!+ \sigma_u^2 \Big).
    \label{eq:communication_sinr}
\end{equation}
Spectral efficiencies (SE) are $R_u[t] = \log_2(1 + \gamma_u^{\rm c}[t])$ and $\mathcal C[t] = \sum_{u=1}^{U} R_u[t]$ in bit/s/Hz, while absolute rates are $B_WR_u[t]$ and $B_W\mathcal C[t]$ for bandwidth $B_W$. Working in bit/s/Hz keeps the rate constraints independent of $B_W$.

\subsection{Received Radar Echo and Sensing SINR}
\label{subsec:sensing_signal}

The BS receives target echo, clutter, self-interference, and noise. With sensing sample period $T_{\rm s}$ and number of samples $N_{\rm s}$ in the CPI, $t_k=kT_{\rm s}$ for $k=0,\ldots,N_{\rm s}-1$ referenced to the start of block $t$ where $r_{\rm T}[t],\tau_{\rm T}[t]$ are evaluated, and $N_{\rm s}T_{\rm s}\leq T_{\rm b}$ so the CPI lies inside one block.
The received sensing vector is modeled as
\begin{equation}
\begin{split}
    \vec{y}_{\rm s}[k,t]
    =~& e^{\mathsf j 2\pi f_{D,{\rm T}}[t] t_k}
    \mathbf H_{\rm T}[t] \vec{x}[k-\bar\tau_{\rm T}[t]]\\
    &+ \textstyle \sum_{\ell=1}^{L_{\rm cl}}
    e^{\mathsf j 2\pi f_{D,{\rm cl},\ell}[t] t_k}
    \mathbf H_{{\rm cl},\ell}[t] \vec{x}[k-\bar\tau_\ell[t]]\\
    &+ \sqrt{\chi_{\rm SI}}\, \mathbf H_{\rm SI}[t] \vec{x}[k]
    + \vec{n}_{\rm r}[k],
    \label{eq:received_sensing}
\end{split}
\end{equation}
with delays in samples $\bar\tau_{\rm T}[t]=\operatorname{round}(\tau_{\rm T}[t]/T_{\rm s})$ and $\bar\tau_\ell[t]$.
$\mathbf H_{\rm SI} \in \mathbb C^{N_{\rm r} \times N_{\rm t}}$ is the transmit-to-receive SI channel before cancellation. It is a property of the array geometry and the radio hardware rather than of the propagation environment, so it is drawn once per trajectory and held fixed, with i.i.d. $\mathcal{CN} (0, 1/N_{\rm t})$ entries. The variance gives $\mathbb E\{\|\mathbf H_{\rm SI}\|_{\mathrm F}^2\}=N_{\rm r}$, one unit of coupled power per receive antenna independent of $N_{\rm t}$, so the SI level does not drift with array size.
$0\leq \chi_{\rm SI} \leq1$ is the residual SI power level after cancellation and $\vec{n}_{\rm r}[k] \sim \mathcal{CN}(\vec{0}, \sigma_{\rm r}^2 \mathbf I_{N_{\rm r}})$.
The Doppler phases sit in \eqref{eq:received_sensing} rather than \eqref{eq:target_channel}, so they vary with $k$ and are resolvable by matched filtering at the target delay, where each has unit modulus and is common to every stream of its scatterer, altering neither \eqref{eq:sensing_sinr} nor \eqref{eq:interference_covariance}.
A unit-norm combiner $\vec{u}_{\rm s}\in\mathbb C^{N_{\rm r}}$ gives $z_{\rm s}[k,t]=\vec{u}_{\rm s}^{\mathsf H}\vec{y}_{\rm s}[k,t]$.

The receiver matched-filters $z_{\rm s}$ against $s_{\rm s}$ over the CPI and reads the target range bin, though Doppler is unresolved at these CPIs.
Coherent integration of $N_{\rm s}$ samples gives processing gain $G_{\rm p}=N_{\rm s}$ on $s_{\rm s}$-driven returns in the target echo bin and none on returns uncorrelated with $s_{\rm s}[k]$, which are the target-reflected data streams, noise, and delay-mismatched returns such as clutter and residual SI.
All are therefore sensing interference.
The sensing SINR is
\begin{equation}
\begin{split}
    \gamma^{\rm s}[t]
    ={}&G_{\rm p}|\vec{u}_{\rm s}^{\mathsf H}\mathbf H_{\rm T}[t]
    \vec{v}_{\rm s}|^2\Big/\Big(
    \textstyle\sum_{u=1}^{U}|\vec{u}_{\rm s}^{\mathsf H}
    \mathbf H_{\rm T}[t]\vec{v}_u|^2\\
    &+\vec{u}_{\rm s}^{\mathsf H}\mathbf R_{\rm in}[t]
    \vec{u}_{\rm s}\Big),
    \label{eq:sensing_sinr}
\end{split}
\end{equation}
where $\mathbf R_{\rm in}[t]$ is the interference covariance after matched-filtering, given by
\begin{equation}
\begin{split}
    \mathbf R_{\rm in}[t]
    ={}&\sum_{\ell=1}^{L_{\rm cl}}
    \mathbf H_{{\rm cl},\ell}[t]
    \mathbf R_x
    \mathbf H_{{\rm cl},\ell}^{\mathsf H}[t]\\
    &+\chi_{\rm SI}\mathbf H_{\rm SI}[t]
    \mathbf R_x\mathbf H_{\rm SI}^{\mathsf H}[t]
    +\sigma_{\rm r}^2\mathbf I_{N_{\rm r}}.
    \label{eq:interference_covariance}
\end{split}
\end{equation}
Since scatterer distances differ from $r_{\rm T}$, the delays $\bar\tau_\ell$ miss the correlator at $\bar\tau_{\rm T}$ and the co-range remainder is treated as delay-mismatched too. This is a modeling assumption: a scatterer inside the $c/(2B_W)=1.5$~m range cell of the target would be coherent with $s_{\rm s}$ and receive the full gain $G_{\rm p}$, so for such co-range clutter the model is optimistic.
Clutter therefore enters with the full $\mathbf R_x$ and no processing gain, and the residual SI with the factor $\chi_{\rm SI}$. At these operating ranges, $\gamma^{\rm s}\ge\Gamma_{\rm s}^{\min}$ is attainable only after coherent integration, i.e., with $G_{\rm p}=N_{\rm s}$.
Equation~\eqref{eq:sensing_sinr} assumes a dedicated sensing-stream receiver. Communication symbols are not exploited at the radar processor, so target-reflected communication returns remain interference.

\subsection{Channel Aging}
\label{subsec:csi}

The SINRs \eqref{eq:communication_sinr} and \eqref{eq:sensing_sinr} are realized on the true channels \eqref{eq:rician_channel} and \eqref{eq:target_channel}, whereas the BS computes the precoders and the combiner from its own knowledge of these channels. CSI estimation error is not considered, so the CSI imperfection studied here is channel aging alone.

Channel aging degrades achievable rates in mobile ISAC and is modeled explicitly \cite{Menanor_TVT_2026}. Outdated CSI is the $n$-step evaluation of \eqref{eq:gauss_markov}, designing on $\{\vec{h}_u[t],\mathbf H_{\rm T}[t]\}$ and scoring on $\{\vec{h}_u[t+n],\mathbf H_{\rm T}[t+n]\}$.

\section{Problem Formulation and Feasibility}
\label{sec:problem}

\subsection{Joint Beamforming and Power Allocation Problem}

The communication utility is denoted by $\mathcal C$, while the sensing utility is the dimensionless score $\mathcal S=\log_2(1+\gamma^{\rm s})$, not an achievable sensing rate.
To equalize numerical scale, we set $\widetilde{\mathcal C}=\mathcal C/\mathcal C^{\rm ref}$, $\widetilde{\mathcal S}=\mathcal S/\mathcal S^{\rm ref}$ with $\mathcal C^{\rm ref}=\mathbb E\{\mathcal C\}|_{\omega=1}>0$ and $\mathcal S^{\rm ref}=\mathbb E\{\mathcal S\}|_{\omega=0}>0$ the empirical averages of the analytical design at its two extremes, which are evaluated once per deployment and not per realization, so tradeoff weight $\omega$ maintains a consistent interpretation across all channel realizations.
The joint optimization problem is formulated as
\begin{subequations}
\label{eq:isac_problem}
\begin{align}
    \underset{\substack{\vec{p},\,p_{\rm s},\, \mathbf W,\, \vec{w}_{\rm s},\,\vec{u}_{\rm s}}} {\operatorname{maximize}}
    \quad&
    \omega \widetilde{\mathcal C} + (1-\omega) \widetilde{\mathcal S}
    \label{eq:isac_objective}\\
    \operatorname{subject\ to}\quad
    & \gamma^{\rm s} \geq \Gamma_{\rm s}^{\min},
    \label{eq:sensing_constraint}\\
    & R_u\geq R_u^{\min},\quad u=1,\ldots,U,
    \label{eq:user_rate_constraint}\\
    & p_{\rm s} + \sum_{u=1}^{U} p_u \leq P_{\max},
    \label{eq:isac_power_constraint}\\
    & p_{\rm s} \geq 0, \quad p_u \geq 0,\quad u=1,\ldots,U,
    \label{eq:nonnegative_power}\\
    & \|\vec{w}_{\rm s}\|^2=1, \quad \|\vec{w}_u\|^2=1, \quad \forall ~u,
    \label{eq:beam_constraints}\\
    & \|\vec{u}_{\rm s}\|^2=1,
    \label{eq:combiner_constraint}
\end{align}
\end{subequations}
where $\vec{p}=[p_1,\ldots,p_U]^{\mathsf T}$ and $\mathbf W=[\vec{w}_1,\ldots,\vec{w}_U]$. The SE thresholds $R_u^{\min}$ are in bit/s/Hz, and together with the sensing threshold $\Gamma_{\rm s}^{\min}$, they keep either function from being sacrificed at extreme $\omega$. Sweeping $\omega$ traces the supported points of the Pareto frontier.
However, deriving the exact Pareto boundary remains a distinct and complex challenge, that is sensitive to the instantaneous worst-case sensing requirement~\cite{Liu_TWC_2025}.

Problem~\eqref{eq:isac_problem} is nonconvex, as both SINRs are coupled fractional quadratics in $(\vec{p},p_{\rm s},\mathbf W, \vec{w}_{\rm s},\vec{u}_{\rm s})$.
Furthermore, mutual interference degrades both metrics.
Sensing leakage degrades $\gamma_u^{\rm c}$ and data leakage degrades $\gamma^{\rm s}$. In addition, since the analytical design enforces no QoS constraint, \eqref{eq:sensing_constraint}--\eqref{eq:user_rate_constraint} must be strictly satisfied per realization, not on the statistical mean.
Finally, $\mathcal S$ is monotone of $\gamma^{\rm s}$, ensuring numerical balance with the communication utility.

\subsection{Closed-Form Sensing SINR and Infeasibility}
\label{subsec:feasibility}

Problem~\eqref{eq:isac_problem} is infeasible on some realizations. For the rank-one target \eqref{eq:target_channel} every transmit stream arrives at the receive array through one spatial direction, $\vec{u}_{\rm s}^{\mathsf H}\mathbf H_{\rm T}\vec{v}_i =\alpha_{\rm T} (\vec{u}_{\rm s}^{\mathsf H}\vec{a}_{\rm r}) (\vec{a}_{\rm t}^{\mathsf H}\vec{v}_i)$, so \eqref{eq:sensing_sinr} becomes
\begin{equation}
    \gamma^{\rm s}
    =\frac{G_{\rm p}A\rho_{\rm s}}
    {B\rho_{\rm s}+1},
    \qquad
    \rho_{\rm s}
    =\frac{|\vec{u}_{\rm s}^{\mathsf H}\vec{a}_{\rm r}|^2}
    {\vec{u}_{\rm s}^{\mathsf H}\mathbf R_{\rm in}\vec{u}_{\rm s}},
    \label{eq:sensing_sinr_closed}
\end{equation}
where $A=p_{\rm s}|\alpha_{\rm T}|^2 |\vec{a}_{\rm t}^{\mathsf H}\vec{w}_{\rm s}|^2$ and $B=|\alpha_{\rm T}|^2\sum_{u}p_u |\vec{a}_{\rm t}^{\mathsf H}\vec{w}_u|^2$ are independent of $\vec{u}_{\rm s}$.
The generalized Rayleigh quotient $\rho_{\rm s}$
is maximized by $\vec{u}_{\rm s}^{\star}\propto\mathbf R_{\rm in}^{-1}\vec{a}_{\rm r}$~\cite[Lemma~B.10]{Bjornson_2017} at $\rho_{\rm s}^{\star}=\vec{a}_{\rm r}^{\mathsf H} \mathbf R_{\rm in}^{-1}\vec{a}_{\rm r}$.
Hence
\begin{equation}
    \gamma^{\rm s\star}
    =\frac{G_{\rm p}A\rho_{\rm s}^{\star}}{B\rho_{\rm s}^{\star}+1}.
    \label{eq:sensing_sinr_opt}
\end{equation}
Since $\rho_{\rm s}^{\star}$ depends on the powers through $\mathbf R_{\rm in}$, raising $p_u$ increases $B$ and expands $\mathbf R_{\rm in}$, both of which decrease $\gamma^{\rm s\star}$.
Thus, the sensing constraint is the most stringent at large communication power.

\begin{theorem}[Sensing-only power split]
\label{thm:max_sensing}
For fixed beams and combiner, $\gamma^{\rm s}$ in \eqref{eq:sensing_sinr} is maximized by the sensing-only split $p_{\rm s}=P_{\max}$, $p_u=0$. With the minimum variance distortionless response (MVDR) combiner $\vec{u}_{\rm s}\propto(\mathbf R_{\rm in}^{({\rm s})})^{-1} \vec{a}_{\rm r}$ and the target-matched beam $\vec{w}_{\rm s}=\vec{a}_{\rm t}(\phi_{\rm T})/\sqrt{N_{\rm t}}$, the resulting sensing SINR admits the closed form
\begin{equation}
    \gamma^{\rm s,max}
    =G_{\rm p}P_{\max}|\alpha_{\rm T}|^2 N_{\rm t}\,
    \vec{a}_{\rm r}^{\mathsf H}
    (\mathbf R_{\rm in}^{({\rm s})})^{-1}
    \vec{a}_{\rm r},
    \label{eq:gamma_s_max}
\end{equation}
where $\mathbf R_{\rm in}^{({\rm s})}$ is \eqref{eq:interference_covariance} under the sensing-only allocation.
\end{theorem}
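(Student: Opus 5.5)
The plan is to work from the rank-one factorization \eqref{eq:sensing_sinr_closed}, $\gamma^{\rm s}=G_{\rm p}A\rho_{\rm s}/(B\rho_{\rm s}+1)$, and treat the two claims separately: first the optimality of the split for fixed beams and combiner, then the closed form at the MVDR combiner and matched beam.

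\textbf{Power split.} Fix $\vec{w}_{\rm s},\mathbf W,\vec{u}_{\rm s}$. I will show that $\gamma^{\rm s}$ is nondecreasing in $p_{\rm s}$ and nonincreasing in each $p_u$; it then suffices to compare against $(P_{\max},\vec{0})$.

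\begin{itemize}
\item The numerator $G_{\rm p}p_{\rm s}|\alpha_{\rm T}|^2|\vec{u}_{\rm s}^{\mathsf H}\vec{a}_{\rm r}|^2|\vec{a}_{\rm t}^{\mathsf H}\vec{w}_{\rm s}|^2$ is linear in $p_{\rm s}$ and independent of $\vec{p}$.
\item For the denominator of \eqref{eq:sensing_sinr}, write $\mathbf R_x=p_{\rm s}\vec{w}_{\rm s}\vec{w}_{\rm s}^{\mathsf H}+\sum_u p_u\vec{w}_u\vec{w}_u^{\mathsf H}$ in \eqref{eq:interference_covariance}. This gives an affine form
\begin{equation*}
D=\sigma_{\rm r}^2+p_{\rm s}c_{\rm s}+\sum_u p_u(b_u+c_u),
\end{equation*}
with nonnegative coefficients $c_{\rm s}=\sum_\ell|\vec{u}_{\rm s}^{\mathsf H}\mathbf H_{{\rm cl},\ell}\vec{w}_{\rm s}|^2+\chi_{\rm SI}|\vec{u}_{\rm s}^{\mathsf H}\mathbf H_{\rm SI}\vec{w}_{\rm s}|^2$, $b_u=|\vec{u}_{\rm s}^{\mathsf H}\mathbf H_{\rm T}\vec{w}_u|^2$, and $c_u$ defined analogously to $c_{\rm s}$.
\item Hence $\gamma^{\rm s}=Kp_{\rm s}/(\sigma_{\rm r}^2+p_{\rm s}c_{\rm s}+\sum_u p_u d_u)$ with $K,d_u\ge0$. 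This is decreasing in each $p_u$.
\item It is increasing in $p_{\rm s}$, since its $p_{\rm s}$-derivative has the sign of $\sigma_{\rm r}^2+\sum_u p_ud_u>0$.
\item For any feasible $(p_{\rm s},\vec{p})$, first setting $\vec{p}=\vec{0}$ does not lower $\gamma^{\rm s}$. Then raising $p_{\rm s}$ to $P_{\max}$ does not lower it either.
\end{itemize}

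\textbf{Closed form.} Under the sensing-only allocation, $B=0$, so $\gamma^{\rm s}=G_{\rm p}A\rho_{\rm s}$.

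\begin{itemize}
\item With $\vec{w}_{\rm s}=\vec{a}_{\rm t}/\sqrt{N_{\rm t}}$ we get $|\vec{a}_{\rm t}^{\mathsf H}\vec{w}_{\rm s}|^2=N_{\rm t}$, so $A=P_{\max}|\alpha_{\rm T}|^2N_{\rm t}$.
\item For the MVDR combiner, invoke the generalized Rayleigh quotient result cited before \eqref{eq:sensing_sinr_opt}, with $\mathbf R_{\rm in}=\mathbf R_{\rm in}^{({\rm s})}$. This matrix is fixed once the powers and $\vec{w}_{\rm s}$ are fixed, and it is positive definite because of $\sigma_{\rm r}^2\mathbf I$.
\item Substituting $\vec{u}_{\rm s}=\mathbf R^{-1}\vec{a}_{\rm r}/\|\mathbf R^{-1}\vec{a}_{\rm r}\|$ gives $\rho_{\rm s}=(\vec{a}_{\rm r}^{\mathsf H}\mathbf R^{-1}\vec{a}_{\rm r})^2/(\vec{a}_{\rm r}^{\mathsf H}\mathbf R^{-1}\vec{a}_{\rm r})=\vec{a}_{\rm r}^{\mathsf H}\mathbf R^{-1}\vec{a}_{\rm r}$. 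The normalization cancels, which yields \eqref{eq:gamma_s_max}.
\end{itemize}

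\textbf{Main obstacle.} The subtle point is the order of optimization. $\mathbf R_{\rm in}$ depends on the powers, so the monotonicity argument must hold the combiner fixed; that is why I argue on $D$ rather than on $\rho_{\rm s}^\star$. Only afterwards is $\mathbf R_{\rm in}^{({\rm s})}$ evaluated at the sensing-only split and the MVDR combiner applied. The statement asserts maximality only for fixed beams and combiner, so no joint-optimality claim over the beam needs to be proved.
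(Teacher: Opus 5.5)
Your proof is correct and follows the paper's route: an affine denominator in the powers with nonnegative coefficients, monotonicity to move all power to sensing, and the Rayleigh-quotient/MVDR evaluation at $\mathbf R_{\rm in}^{({\rm s})}$ with the matched beam. The differences are cosmetic. The paper argues by fixed-budget transfers of $\Delta$ from each user to sensing, plus the $p_{\rm s}$-derivative for unused budget, where you use coordinatewise monotonicity. It also proves MVDR optimality via Cauchy--Schwarz, whereas you cite the lemma and substitute directly.
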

\begin{IEEEproof}
The proof is given in Appendix~\ref{app:proof_max_sensing}.
\end{IEEEproof}
Theorem~\ref{thm:max_sensing} states that shifting power from any communication stream to sensing increases $\gamma^{\rm s}$, and that maximizing out the combiner preserves this split while yielding the MVDR receiver, which optimally suppresses interference under a distortionless constraint toward $\phi_{\rm T}$~\cite[Eq.~(88)]{Liu_JPROC_2026}. Since the sensing-only state is achievable, $\gamma^{\rm s,max}\ge\Gamma_{\rm s}^{\min}$ is a sufficient but not necessary condition for \eqref{eq:sensing_constraint}. In practice, any $R_u^{\min}>0$ forbids this sensing-only vertex, and a clutter-aware $\vec{w}_{\rm s}$ still affects $\mathbf R_{\rm in}^{({\rm s})}$ through illumination.

Since $\mathbf R_{\rm in}\succeq\sigma_{\rm r}^2\mathbf I_{N_{\rm r}}$, it follows that $\rho_{\rm s}^{\star}\le N_{\rm r}/\sigma_{\rm r}^2$. Thus,
\begin{equation}
    \gamma^{\rm s,max}
    \le\gamma_{\rm NL}
    \triangleq
    \frac{G_{\rm p}P_{\max}|\alpha_{\rm T}|^2 N_{\rm t}N_{\rm r}}
    {\sigma_{\rm r}^2}.
    \label{eq:noise_limited}
\end{equation}
The term $\gamma_{\rm NL}$ is the noise-limited monostatic SINR after coherent integration and upper-bounds all achievable $\gamma^{\rm s}$. Substituting \eqref{eq:target_gain} into this bound gives a necessary geometric condition for feasibility.

\begin{corollary}[Necessary range--RCS--power condition]
\label{cor:range}
If
\begin{equation}
    r_{\rm T}^{4}
    >
    \frac{G_{\rm p}P_{\max}G_{\rm t}G_{\rm r}\lambda^{2}
    \sigma_{\rm RCS}N_{\rm t}N_{\rm r}}
    {(4\pi)^{3}\sigma_{\rm r}^{2}\Gamma_{\rm s}^{\min}},
    \label{eq:range_condition}
\end{equation}
then \eqref{eq:sensing_constraint} remains infeasible regardless of the power allocation, transmit beams, and receive combiner.
Equivalently, Problem~\eqref{eq:isac_problem} has no feasible solution on any realization where $\gamma_{\rm NL}<\Gamma_{\rm s}^{\min}$.
\end{corollary}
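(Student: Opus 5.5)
The plan is to chain three facts: the noise-limited bound \eqref{eq:noise_limited} caps every achievable sensing SINR; substituting \eqref{eq:target_gain} makes that cap an explicit function of $r_{\rm T}$; and \eqref{eq:range_condition} is exactly the statement that this cap lies below $\Gamma_{\rm s}^{\min}$.

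\textbf{Step 1: bound every feasible point, not only the sensing-only vertex.} Fix any $(\vec{p},p_{\rm s},\mathbf W,\vec{w}_{\rm s},\vec{u}_{\rm s})$ satisfying \eqref{eq:isac_power_constraint}--\eqref{eq:combiner_constraint}. Starting from \eqref{eq:sensing_sinr_closed}, drop $B\rho_{\rm s}\ge0$ to get $\gamma^{\rm s}\le G_{\rm p}A\rho_{\rm s}$. Then bound the two factors separately:
\begin{itemize}
\item by Cauchy--Schwarz, $|\vec{a}_{\rm t}^{\mathsf H}\vec{w}_{\rm s}|^2\le N_{\rm t}$ and $p_{\rm s}\le P_{\max}$, so $A\le P_{\max}|\alpha_{\rm T}|^2N_{\rm t}$;
\item since $\mathbf R_{\rm in}\succeq\sigma_{\rm r}^2\mathbf I_{N_{\rm r}}$ for any $\mathbf R_x\succeq0$ (clutter and SI terms are PSD), $\vec{u}_{\rm s}^{\mathsf H}\mathbf R_{\rm in}\vec{u}_{\rm s}\ge\sigma_{\rm r}^2$. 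Cauchy--Schwarz also gives $|\vec{u}_{\rm s}^{\mathsf H}\vec{a}_{\rm r}|^2\le N_{\rm r}$, so $\rho_{\rm s}\le N_{\rm r}/\sigma_{\rm r}^2$.
\end{itemize}
Combining these gives $\gamma^{\rm s}\le\gamma_{\rm NL}$ uniformly over all powers, beams and combiners. This is the careful point. The paper states \eqref{eq:noise_limited} via $\gamma^{\rm s,max}$ under the specific MVDR and matched beam of Theorem~\ref{thm:max_sensing}, so it is cleaner to rederive the bound directly for arbitrary feasible variables rather than rely on that optimality chain. One could alternatively invoke Theorem~\ref{thm:max_sensing} plus the observation that the matched beam maximizes $A$ while $\mathbf R_{\rm in}$ is bounded below independently of the beam. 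The direct bound avoids any gap from a clutter-aware $\vec{w}_{\rm s}$ changing $\mathbf R_{\rm in}^{({\rm s})}$.

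\textbf{Step 2: make the cap explicit in range.} Substitute $|\alpha_{\rm T}|^2=G_{\rm t}G_{\rm r}\lambda^2\sigma_{\rm RCS}/[(4\pi)^3r_{\rm T}^4]$ from \eqref{eq:target_gain}; the phase term has unit modulus and drops out. This yields
\[
\gamma_{\rm NL}=\frac{G_{\rm p}P_{\max}G_{\rm t}G_{\rm r}\lambda^2\sigma_{\rm RCS}N_{\rm t}N_{\rm r}}{(4\pi)^3\sigma_{\rm r}^2r_{\rm T}^4}.
\]

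\textbf{Step 3: conclude infeasibility.} Rearranging shows that \eqref{eq:range_condition} is equivalent to $\gamma_{\rm NL}<\Gamma_{\rm s}^{\min}$; all quantities are positive, so the inequality direction is preserved. Then every admissible choice gives $\gamma^{\rm s}\le\gamma_{\rm NL}<\Gamma_{\rm s}^{\min}$, which violates \eqref{eq:sensing_constraint}. Since \eqref{eq:sensing_constraint} is a constraint of Problem~\eqref{eq:isac_problem}, the problem has no feasible point on that realization, and this holds regardless of the rate constraints. No step is genuinely hard beyond Step 1's uniformity.
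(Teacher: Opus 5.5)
Your proof is correct and follows the paper's route. The closing paragraph of the appendix likewise bounds $\gamma^{\rm s}$ uniformly over all splits, beams, and combiners, using $D\ge\sigma_{\rm r}^2$, $p_{\rm s}\le P_{\max}$, $|\vec{u}_{\rm s}^{\mathsf H}\vec{a}_{\rm r}|^2\le N_{\rm r}$, and $|\vec{a}_{\rm t}^{\mathsf H}\vec{w}_{\rm s}|^2\le N_{\rm t}$ in the affine form \eqref{eq:app_affine}; this is your bound after dropping $B\rho_{\rm s}$, and the paper then substitutes \eqref{eq:target_gain} as you do, so your decision not to rely on the MVDR/matched-beam chain alone matches the paper's own caveat that \eqref{eq:gamma_s_max} need not be the global maximum over beams.
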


This noise-limited bound is necessary but not sufficient, as clutter and residual SI inflate $\mathbf R_{\rm in}$ above $\sigma_{\rm r}^2\mathbf I$.
Because the interference rank at the sensing-only allocation is at most $L_{\rm cl}+1$, a noise-dominated subspace survives whenever this rank is below $N_{\rm r}$ and $\vec{a}_{\rm r}(\phi_{\rm T})$ extends outside the interference span, allowing $\gamma^{\rm s}$ to grow without bound in $p_{\rm s}$.
Conversely, if a clutter patch perfectly aligns with the target angle, $\gamma^{\rm s}$ saturates at a signal-to-clutter-and-noise ratio (SCNR) ceiling. Resolving this alignment conventionally requires Doppler space--time processing~\cite[Secs.~V-B, V-C]{Liu_JPROC_2026}, which is unavailable in our receiver (Section~\ref{subsec:sensing_signal}).
Finally, while joint transmission raises the SI rank to $U+1$ and potentially spans $\mathbb C^{N_{\rm r}}$ alongside clutter, the operating $\chi_{\rm SI}=-110$~dB suppression pushes this SCNR ceiling far above $P_{\max}$. Therefore, at finite $P_{\max}$, the power budget is the true binding limit, except during exact target--clutter alignment.

We define the analytic per-realization sensing floor evaluated at the allocation in \eqref{eq:gamma_s_max} as
\begin{equation}
    \check g_{\rm s}
    =\frac{[\Gamma_{\rm s}^{\min}-\gamma^{\rm s,max}]_+}
    {\Gamma_{\rm s}^{\min}+\epsilon_0},
    \label{eq:g_floor_analytic}
\end{equation}
 where $[x]_+=\max\{0,x\}$.
 This measures the normalized violation at the sensing-only matched-beam point, which falls to zero when the threshold $\Gamma_{\rm s}^{\min}$ is met.
 Its expectation $\check{\varepsilon}_{\rm s}=\mathbb E\{\check g_{\rm s}\}$ is $\omega$-independent, since $\gamma^{\rm s,max}$ depends solely on the realization and $P_{\max}$.

It is important to distinguish this exhibited floor from an absolute physical limit.
Corollary~\ref{cor:range} rigorously establishes that the violation cannot vanish globally, since $\gamma_{\rm NL}$ bounds $\gamma^{\rm s}$ for every beam, split, and combiner. Thus, the noise-limited floor $\varepsilon_{\rm NL}=\mathbb E\{[\Gamma_{\rm s}^{\min}-\gamma_{\rm NL}]_+/ (\Gamma_{\rm s}^{\min}+\epsilon_0)\}$ lower-bounds the expected normalized sensing violation of every design, and $\varepsilon_{\rm NL}>0$ whenever \eqref{eq:range_condition} holds on a set of positive measure.
Conversely, $\check{\varepsilon}_{\rm s}$ only measures the violation at one specific allocation. Because Theorem~\ref{thm:max_sensing} is sufficient but not necessary, a clutter-aware $\vec{w}_{\rm s}$ might theoretically achieve a lower violation. Therefore, the minimal attainable violation $\varepsilon_{\rm s}^{\star}$ satisfies $\varepsilon_{\rm NL}\le\varepsilon_{\rm s}^{\star}\le\check{\varepsilon}_{\rm s}$.

\subsection{Sensing Floor at the Communication-Heavy Extreme}

Any $R_u^{\min}>0$ requires $p_u>0$, so the sensing-only allocation of Theorem~\ref{thm:max_sensing} is never jointly feasible, and a split $p_{\rm s}/P_{\max}\in(0,1)$ is needed to serve both functions. With $p_u>0$, the leakage $B$ in \eqref{eq:sensing_sinr_closed} is positive and $\mathbf R_{\rm in}$ contains the clutter and SI driven by the user streams. By Step~2 of Appendix~\ref{app:proof_max_sensing}, the sensing violation of a jointly feasible design is therefore at least that of the sensing-only split with the same beams and combiner, and strictly larger wherever $\gamma^{\rm s}<\Gamma_{\rm s}^{\min}$.
This excess grows with $\omega$, since the objective \eqref{eq:isac_objective} then rewards moving power from sensing to the users, while \eqref{eq:sensing_constraint} holds $p_{\rm s}/P_{\max}$ above a realization-dependent threshold. A design that keeps more sensing power than the objective alone would choose thus gives up mean utility for reliability, whereas the analytical prior, which sets the split from $\omega$ alone, makes the opposite trade.

\section{Physics-Informed Unsupervised Learning}
\label{sec:unsl_solution}

PI-UnSL maps the communication, target, clutter, and SI channels directly to an allocation, and trains on the objective and constraints of \eqref{eq:isac_problem} without optimal labels~\cite{Triwidyastuti_TVT_2026}.
Fig.~\ref{fig:signal_chain} places it in the signal chain of Section~\ref{sec:system_model}. From the CSI and the priority $\omega$ (or tradeoff weight), one forward pass sets the power split and transmit beams at the BS transmitter and the combiner at its sensing receiver, while the users and the propagation are left unchanged. During training, the same SINR model that scores the link supplies the label-free loss.

\begin{figure*}[!t]
    \centering
    \resizebox{\textwidth}{!}{\definecolor{tndblue}{HTML}{0018A8}%
\definecolor{tndpurple}{HTML}{6B3FA0}%
\definecolor{tndgreen}{HTML}{086C46}%
\definecolor{tndgray}{HTML}{6E6E6E}%
\definecolor{tndred}{HTML}{ED1B24}%
\definecolor{tndcyan}{HTML}{00A9CE}%
\definecolor{tndgold}{HTML}{D4AA00}%
\definecolor{tndorange}{HTML}{FF9800}%
\begin{tikzpicture}[
    font=\scriptsize,
    >={Latex[length=1.5mm,width=1.3mm]},
    box/.style={draw, rounded corners=1.5pt, align=center, minimum height=0.9cm,
                inner sep=2.2pt, line width=0.6pt},
    sys/.style={box, draw=tndgray},
    intf/.style={box, draw=tndred},
    txc/.style={box, draw=tndblue, line width=1pt},
    rxc/.style={box, draw=tndpurple, line width=1pt},
    perf/.style={box, draw=tndorange!80!black},
    inp/.style={box, draw=tndcyan!75!black},
    prior/.style={box, draw=tndgold!75!black},
    prop/.style={box, draw=tndgreen, line width=1pt},
    propt/.style={prop, dashed},
    sumn/.style={draw, circle, inner sep=0pt, minimum size=4.2mm, line width=0.6pt},
    sig/.style={->, line width=0.6pt},
    ctl/.style={->, line width=1.1pt, draw=tndgreen},
    csi/.style={->, line width=0.8pt, draw=tndcyan!75!black, dash pattern=on 3pt off 1.5pt},
    trn/.style={->, line width=0.8pt, draw=tndgreen, dash pattern=on 3pt off 1.5pt},
    mdl/.style={->, line width=0.6pt, draw=tndorange!80!black, densely dotted},
    halo/.style={preaction={draw, white, line width=3pt, -}},
    band/.style={draw=#1, rounded corners=4pt, line width=0.5pt, dash pattern=on 2pt off 1.5pt},
    title/.style={font=\scriptsize\bfseries, text=#1, anchor=south west, inner sep=1pt},
    dot/.style={circle, fill, inner sep=0pt, minimum size=1.6pt},
]

\draw[band=tndgray] (-0.1,-0.75) rectangle (19.25,4.3);
\node[title=tndgray] at (-0.05,4.32)
    {Mobile multiuser MIMO-ISAC link, block $t$ (BS with $N_{\rm t}$ transmit and $N_{\rm r}$ receive antennas)};

\node[sys, minimum width=1.7cm] (src) at (0.9,2.0)
    {Unit-energy\\symbols\\$s_{\rm s}[k]$, $s_u[k]$};
\node[txc, minimum width=2.0cm] (pow) at (3.2,2.0)
    {Power split\\$p_{\rm s},p_1,\ldots,p_U$\\$\textstyle\sum_i p_i=P_{\max}$};
\node[txc, minimum width=2.0cm] (bf) at (5.6,2.0)
    {Transmit beams\\$\vec{w}_{\rm s},\vec{w}_u$\\$\vec{v}_i=\sqrt{p_i}\,\vec{w}_i$};
\node[sumn] (sp) at (7.3,2.0) {$+$};
\node[font=\tiny, align=center, below=1pt] at (sp.south) {super-\\position};
\draw[sig] (src) -- (pow);
\draw[sig] (pow) -- (bf);
\draw[sig] (bf) -- (sp);

\node[sys,  minimum width=2.6cm] (hc)  at (9.9,3.45)
    {Rician mobile channel\\$\vec{h}_u[t]$, $u=1,\ldots,U$};
\node[sys,  minimum width=2.6cm] (ht)  at (9.9,2.0)
    {Target echo $\mathbf H_{\rm T}[t]$\\delay $\bar\tau_{\rm T}$, Doppler\\\textcolor{tndred}{+ data echoes $\mathbf H_{\rm T}\vec{v}_u$}};
\node[intf, minimum width=2.6cm, minimum height=0.62cm] (hcl) at (9.9,0.75)
    {Clutter $\mathbf H_{{\rm cl},\ell}[t]$, $\ell\le L_{\rm cl}$};
\node[intf, minimum width=2.6cm, minimum height=0.62cm] (hsi) at (9.9,-0.2)
    {Residual SI $\sqrt{\chi_{\rm SI}}\,\mathbf H_{\rm SI}$};
\draw[line width=0.6pt] (sp) -- node[above]{$\vec{x}[k]$} (8.1,2.0);
\draw[line width=0.6pt] (8.1,3.45) -- (8.1,-0.2);
\foreach \n in {hc,ht,hcl,hsi} {
    \draw[sig] (\n.west -| 8.1,0) -- (\n.west);
}
\node[dot] at (8.1,2.0) {};

\node[sys, minimum width=2.6cm] (ue) at (13.8,3.45)
    {User $u$ receiver\\$y_u=\vec{h}_u^{\mathsf H}\vec{x}[k]+n_u$\\\textcolor{tndred}{$\vec{v}_{\rm s}$, $\vec{v}_{u'\ne u}$ interfere}};
\node[perf, minimum width=2.3cm] (rate) at (17.9,3.45)
    {SINR $\gamma_u^{\rm c}$, SE $R_u$\\QoS: $R_u\ge R_u^{\min}$};
\draw[sig] (hc) -- (ue);
\draw[sig] (ue) -- (rate);

\node[sumn] (rs) at (11.7,0.75) {$+$};
\node[font=\tiny, anchor=north west, inner sep=0.5pt] at (rs.south east) {$+\vec{n}_{\rm r}$};
\node[rxc, minimum width=1.9cm] (comb) at (13.4,0.75)
    {Receive\\combiner\\$z_{\rm s}=\vec{u}_{\rm s}^{\mathsf H}\vec{y}_{\rm s}$};
\node[sys, minimum width=1.8cm] (mf) at (15.55,0.75)
    {Matched filter,\\coherent gain\\$G_{\rm p}=N_{\rm s}$};
\node[perf, minimum width=2.3cm] (sens) at (17.9,0.75)
    {Sensing SINR $\gamma^{\rm s}$, $P_{\rm D}$\\QoS: $\gamma^{\rm s}\ge\Gamma_{\rm s}^{\min}$};
\draw[sig] (ht.east) -| (rs.north);
\draw[sig] (hcl.east) -- (rs.west);
\draw[sig] (hsi.east) -| (rs.south);
\draw[sig] (rs) -- node[above]{$\vec{y}_{\rm s}$} (comb);
\draw[sig] (comb) -- (mf);
\draw[sig] (mf) -- (sens);

\draw[band=tndgreen] (-0.1,-1.25) rectangle (19.25,-5.0);
\node[title=tndgreen, anchor=north west] at (-0.05,-5.02)
    {Proposed PI-UnSL: one forward pass per block at inference; dashed blocks and arrows are used in training only};

\node[inp, minimum width=2.0cm] (in) at (1.1,-2.45)
    {Priority $\omega$\\QoS $\Gamma_{\rm s}^{\min}$, $R_u^{\min}$\\budget $P_{\max}$};
\node[prior, minimum width=2.4cm] (pri) at (3.8,-2.45)
    {Closed-form baseline\\RZF $\vec{w}_u^{(0)}$, matched $\vec{w}_{\rm s}^{(0)}$\\MVDR $\vec{u}_{\rm s}^{(0)}$, powers $f_i^{(0)}$};
\node[prop, minimum width=2.4cm] (qs) at (6.65,-2.45)
    {Physics statistics $\vec{q}_{\rm sys}$\\gains, leakage,\\feasibility margin};
\node[prop, minimum width=2.8cm] (nn) at (9.9,-2.45)
    {Residual network\\encoder $E$, FiLM on $\vec{q}_{\rm sys}$\\heads $D_{\rm p}$, $D_{\rm t}$, $D_{\rm r}$};
\node[sumn, draw=tndgreen, line width=1pt] (res) at (11.95,-2.45) {$+$};
\node[prop, minimum width=2.5cm] (proj) at (14.2,-2.45)
    {Exact normalizations\\power simplex,\\unit-norm $\vec{w}_i$, $\vec{u}_{\rm s}$};

\draw[sig] (in) -- (pri);
\draw[sig] (pri) -- (qs);
\draw[sig] (qs) -- (nn);
\draw[sig] (nn) -- node[above]{$\widetilde{\boldsymbol\Theta}$} (res);
\draw[sig] (res) -- (proj);
\draw[sig] ([xshift=4mm]pri.south) |- (11.95,-3.2) node[pos=0.7, below]{$\boldsymbol\Theta^{(0)}$} -- (res.south);

\node[propt, minimum width=2.5cm] (floor) at (6.1,-4.3)
    {Measured violation\\floor $\check g_{\rm s}(\omega_k)$};
\node[propt, minimum width=3.4cm] (loss) at (9.9,-4.3)
    {Calibrated PHR loss: per-bin\\duals $\mu_i(\omega_k)$, user CVaR,\\achievability gap $\tilde g_{\rm s}$};
\node[propt, minimum width=2.6cm] (dm) at (17.9,-3.4)
    {Differentiable SINRs, SE\\$\gamma_u^{\rm c}$, $R_u$, $\gamma^{\rm s}$};
\draw[trn] ([xshift=-4mm]pri.south) |- (floor.west);
\draw[trn] (floor) -- (loss);
\draw[trn] (proj.east) -| ([xshift=-5mm]dm.north);
\draw[trn] (dm.south) |- (loss.east);
\draw[trn, halo] (loss.north) -- node[right, pos=0.35]{$\nabla_{\boldsymbol\Gamma}$} (nn.south);

\draw[line width=1.1pt, draw=tndgreen] (proj.north) -- (14.2,-0.95) -- (3.2,-0.95);
\draw[ctl] (3.2,-0.95) -- node[right, pos=0.2, text=tndgreen]{$p_{\rm s}, p_u$} (pow.south);
\draw[ctl] (5.6,-0.95) -- node[right, pos=0.2, text=tndgreen]{$\vec{w}_{\rm s}, \vec{w}_u$} (bf.south);
\draw[ctl] (13.4,-0.95) -- node[right, pos=0.3, text=tndgreen]{$\vec{u}_{\rm s}$} (comb.south);
\node[dot, tndgreen, minimum size=2.2pt] at (5.6,-0.95) {};
\node[dot, tndgreen, minimum size=2.2pt] at (13.4,-0.95) {};

\draw[csi, halo] (hsi.south) -- (nn.north);
\draw[csi] (9.9,-1.7) -| (pri.north);
\node[dot, tndcyan!75!black] at (9.9,-1.7) {};
\node[font=\tiny, text=tndcyan!60!black, anchor=south] at (5.2,-1.7)
    {CSI $\mathcal X_{\rm in}$ from block $t-n$};

\draw[mdl] ([xshift=6mm]sens.south) -- node[left, pos=0.72, font=\tiny, text=tndorange!70!black, align=right]{same\\model} ([xshift=6mm]dm.north);

\begin{scope}[shift={(0.0,-5.75)}, every node/.style={font=\scriptsize}]
    \draw[draw=tndgreen, line width=1pt] (0.0,0) rectangle ++(0.32,0.2);
    \node[anchor=west] at (0.38,0.1) {Proposed};
    \draw[draw=tndblue, line width=1pt] (1.75,0) rectangle ++(0.32,0.2);
    \node[anchor=west] at (2.13,0.1) {Tx blocks set by PI-UnSL};
    \draw[draw=tndpurple, line width=1pt] (5.35,0) rectangle ++(0.32,0.2);
    \node[anchor=west] at (5.73,0.1) {Rx block set by PI-UnSL};
    \draw[draw=tndgold!75!black] (8.85,0) rectangle ++(0.32,0.2);
    \node[anchor=west] at (9.23,0.1) {Baseline};
    \draw[draw=tndorange!80!black] (11.45,0) rectangle ++(0.32,0.2);
    \node[anchor=west] at (11.83,0.1) {QoS metrics};
    \draw[draw=tndred] (13.55,0) rectangle ++(0.32,0.2);
    \node[anchor=west] at (13.93,0.1) {Impairments};
    \draw[draw=tndgray] (15.75,0) rectangle ++(0.32,0.2);
    \node[anchor=west] at (16.13,0.1) {Fixed physics};
\end{scope}
\end{tikzpicture}%
}
    \caption{Signal chain of the MIMO-ISAC link with PI-UnSL (green), which sets the transmit blocks (blue) and the radar combiner (purple) from possibly $n$-block-old CSI (cyan). Dashed green paths are used in training only.}
    \label{fig:signal_chain}
\end{figure*}

\subsection{Channel Input and Physics-Informed Conditioning}

With $\mathbf H_{\rm c}=[\vec{h}_1,\ldots,\vec{h}_U]^{\mathsf H} \in \mathbb C^{U\times N_{\rm t}}$, the channel-state matrix is $\mathbf H_{\rm in}=[\mathbf H_{\rm c}^{\mathsf T},\mathbf H_{\rm T}^{\mathsf T}, \mathbf H_{{\rm cl},1}^{\mathsf T},\ldots,\mathbf H_{{\rm cl},L_{\rm cl}}^{\mathsf T}, \mathbf H_{\rm SI}^{\mathsf T}]^{\mathsf T}\in\mathbb C^{N_{\rm in}\times N_{\rm t}}$, $N_{\rm in}=U+(L_{\rm cl}+2)N_{\rm r}$, zero-padded when fewer clutter components are present. The input tensor $\mathcal X_{\rm in} \in \mathbb R^{2\times N_{\rm in}\times N_{\rm t}}$ holds $\mathcal X_{\rm in}(1,:,:)=\Re\{\mathbf H_{\rm in}\}$, $\mathcal X_{\rm in}(2,:,:)=\Im\{\mathbf H_{\rm in}\}$, and the inserted physics information is
\begin{equation}
\begin{split}
	\vec{q}_{\rm sys} =& \big[ \omega, P_{\max},\Gamma_{\rm s}^{\min}, R_1^{\min},\ldots,R_U^{\min}, \sigma_{\rm r}^2,\sigma_1^2,\ldots,\sigma_U^2, \\ & \mathbf{G}_{\rm comm}, g_{\rm sens}, \vec{l}_{\rm sens}, \mu_{\rm feas} \big]^{\mathsf T}
	\label{eq:qsys}
\end{split}
\end{equation}
that conditions the latent.
The interface takes $P_{\max},\Gamma_{\rm s}^{\min}$ and the noise powers logarithmically and the rate thresholds normalized by $\mathcal C^{\rm ref}$.

The remaining terms are the closed-form sufficient statistics of the interference problem. $\mathbf{G}_{\rm comm} \in \mathbb{R}^{U \times (U+1)}$ is the matrix of noise-normalized communication gains, where $ [\mathbf G_{\rm comm}]_{u,j} = 10 \log_{10} (| \vec{h}_u^{\mathsf H} \vec{w}_j^{(0)} |^2 / \sigma_u^2) $.
Target sensing gain is $g_{\rm sens}  = 10 \log_{10} \big( |(\vec{u}_{\rm s}^{(0)})^{\mathsf H} \mathbf H_{\rm T} \vec{w}_{\rm s}^{(0)}|^2 / (\vec{u}_{\rm s}^{(0)})^{\mathsf H}\mathbf R_{\rm in}^{(0)}\vec{u}_{\rm s}^{(0)} \big)$.
$\vec{l}_{\rm sens}$ is the $U$-dimensional vector of sensing leakage terms that can be expressed as $ [\vec{l}_{\rm sens}]_u = 10 \log_{10} \big( |(\vec{u}_{\rm s}^{(0)})^{\mathsf H} \mathbf H_{\rm T} \vec{w}_u^{(0)}|^2 / (\vec{u}_{\rm s}^{(0)})^{\mathsf H}\mathbf R_{\rm in}^{(0)}\vec{u}_{\rm s}^{(0)} \big) $, which represents how much interference each communication beam leaks into the radar receiver.
$\mu_{\rm feas} = 10 \log_{10} ( G_{\rm p} P_{\max} |\alpha_{\rm T}|^2 |\vec{a}_{\rm t}^{\mathsf H} \vec{w}_{\rm s}^{(0)}|^2 \cdot \vec{a}_{\rm r}^{\mathsf H} (\mathbf R_{\rm in}^{(0)})^{-1} \vec{a}_{\rm r} / \Gamma_{\rm s}^{\min} )$ is the analytical feasibility margin. All statistics are evaluated at the closed-form baseline of Section~\ref{subsec:model_prior}, with $\mathbf R_{\rm in}^{(0)}$ its interference covariance and $\vec{u}_{\rm s}^{(0)}$ its combiner. By explicitly providing these exact physical statistics, the network bypasses the need to re-learn the interference polynomials from the raw channel matrices.
Since not every entry of \eqref{eq:qsys} is equally informative, a fixed element-wise weight $\vec{w}_{\rm sys} \in [0,1]^{|\vec{q}_{\rm sys}|}$ is applied, where $\vec{q}_{\rm sys} \leftarrow \vec{w}_{\rm sys} \odot \vec{q}_{\rm sys}$. Since $P_{\max}$, $\Gamma_{\rm s}^{\min}$, $R_u^{\min}$, and the noise powers are fixed in this study, they receive weight $0$. The weight is $1$ for $\omega$, the sensing-stream and own-beam gains in $\mathbf G_{\rm comm}$, and $\mu_{\rm feas}$, and $0.5$ for the cross-user gains, $g_{\rm sens}$, and $\vec{l}_{\rm sens}$. These values were set from the group-lasso importance of a preliminary training run.

Let $E_{\boldsymbol{\Gamma}_{\rm e}}(\cdot)$ be a shared convolutional encoder. Each row of $\mathcal X_{\rm in}$ is scale-normalized and the discarded scale re-injected as $\log_{10}$ in a third map, preserving path-loss-invariant structure alongside absolute magnitude. A feature-wise linear modulation (FiLM) layer $M_{\boldsymbol{\Gamma}_{\rm m}}$ conditions the latent and three heads produce
\begin{equation}
\begin{split}
    \vec{z} = E_{\boldsymbol{\Gamma}_{\rm e}} (\mathcal X_{\rm in}), \quad& (\vec{\alpha}_{\rm m}, \vec{\beta}_{\rm m}) = M_{\boldsymbol {\Gamma}_{\rm m}} (\vec{q}_{\rm sys}),\\
    \widetilde{\vec{z}} = (\vec{1} + \vec{\alpha}_{\rm m}) & \odot \vec{z} + \vec{\beta}_{\rm m},\\
    (\widetilde {\vec{a}}, \widetilde {\mathbf W}_{\rm all}, \widetilde {\vec{u}}_{\rm s}) = \bigl( D_{\boldsymbol {\Gamma}_{\rm p}},& D_{\boldsymbol {\Gamma}_{\rm t}}, D_{\boldsymbol {\Gamma}_{\rm r}} \bigr) ([\widetilde {\vec{z}};\vec{q}_{\rm sys}]),
    \label{eq:multitask_network}
\end{split}
\end{equation}
with $\vec{z}\in\mathbb R^{d_z}$ the latent, $(\vec{\alpha}_{\rm m}, \vec{\beta}_{\rm m})$ the FiLM scale and shift, $[\cdot;\cdot]$ concatenation, $\widetilde {\vec{a}} \in \mathbb R^{U+1}$ raw power logits, $\widetilde {\mathbf W}_{\rm all} \in \mathbb C^{N_{\rm t} \times (U+1)}$ raw transmit beamformers whose first column is the sensing beam, and $\widetilde {\vec{u}}_{\rm s} \in \mathbb C^{N_{\rm r}}$ the raw combiner.
$\boldsymbol {\Gamma} = \{\boldsymbol {\Gamma}_{\rm e}, \boldsymbol {\Gamma}_{\rm m}, \boldsymbol {\Gamma}_{\rm p}, \boldsymbol {\Gamma}_{\rm t}, \boldsymbol {\Gamma}_{\rm r}\}$.
$M_{\boldsymbol{\Gamma}_{\rm m}}$ is a two-layer perceptron and each head of the decoder is a three-layer perceptron with Gaussian error linear unit between layers.

\subsection{Exact Enforcement of the Power and Norm Constraints}

The output layer enforces the power and norm constraints by normalization. With $b_i = \operatorname{softplus} (\widetilde a_i) + \epsilon_0>0$, $i \in \{0,1,\ldots,U\}$ and $i=0$ denoting sensing, and $a_i>0$ the power scores of \eqref{eq:residual_prediction}, \cite{Triwidyastuti_TVT_2026}
\begin{equation}
    p_i=P_{\max}a_i\Big/\textstyle\sum_{j=0}^{U}a_j,
    \qquad i=0,1,\ldots,U,
    \label{eq:power_projection_unsl}
\end{equation}
with $p_0\equiv p_{\rm s}$, so $p_i\ge0$ and $\sum_ip_i=P_{\max}$ exactly. Eq.~\eqref{eq:power_projection_unsl} is a normalization rather than a Euclidean projection, and it places every design on the face $\sum_ip_i=P_{\max}$ of \eqref{eq:isac_power_constraint}. This loses no optimality, since scaling all powers by a common factor greater than one raises every SINR in \eqref{eq:communication_sinr} and \eqref{eq:sensing_sinr}.

The heads in decoder predict a correction $\widetilde{\boldsymbol\Theta}=(\widetilde {\vec{a}}, \widetilde {\mathbf W}_{\rm all}, \widetilde {\vec{u}}_{\rm s})$ to the closed-form design $\boldsymbol\Theta^{(0)}=\{\vec{w}_i^{(0)}, \vec{u}_{\rm s}^{(0)},f_i^{(0)}\}$ with power fractions $f_i^{(0)}=p_i^{(0)}/P_{\max}$,
\begin{equation}
	a_i = b_if_i^{(0)},
	\quad
    \widehat{\vec{w}}_i = \vec{w}_i^{(0)} + \widetilde {\vec{w}}_i,
    \quad
    \widehat {\vec{u}}_{\rm s} = \vec{u}_{\rm s}^{(0)} + \widetilde {\vec{u}}_{\rm s}.
    \label{eq:residual_prediction}
\end{equation}
Zero-initializing every head output layer makes the untrained NN exactly reproduce the analytical design. The softplus contributes a common constant that cancels in \eqref{eq:power_projection_unsl}, and the corrections start at zero, so fine-tuning begins power-and-norm feasible rather than random, though the prior still violates \eqref{eq:sensing_constraint}--\eqref{eq:user_rate_constraint}. Since $\gamma^{\rm s}$ depends on $\vec{u}_{\rm s}$ only through $\rho_{\rm s}$, a learned receive correction cannot beat the exact MVDR combiner at the network's own $\mathbf R_{\rm in}$ (Remark~\ref{rem:combiner_optimality}). It is parameterized as a learned correction to keep the map differentiable, which is suboptimal once the transmit design departs from the prior. With $\widetilde {\vec{w}}_i$ column $i$ of $\widetilde {\mathbf W}_{\rm all}$,
\begin{equation}
    \vec{w}_i = \widehat {\vec{w}}_i / \|\widehat {\vec{w}}_i\|,
    \quad i = 0,1,\ldots,U,
    \qquad
    \vec{u}_{\rm s}
    = \widehat {\vec{u}}_{\rm s} / \|\widehat{\vec{u}}_{\rm s}\|,
    \label{eq:beam_projection_unsl}
\end{equation}
with $\vec{w}_0 \equiv \vec{w}_{\rm s}$, so \eqref{eq:power_projection_unsl}--\eqref{eq:beam_projection_unsl} satisfy \eqref{eq:isac_power_constraint}--\eqref{eq:combiner_constraint} for every prediction.

\subsection{Closed-Form Baseline as the Initialization Prior}
\label{subsec:model_prior}

To prevent the severe constraint violations caused by random neural network initialization, \eqref{eq:residual_prediction} instead structures the network to predict a residual correction over a low-complexity analytical prior. For communication, the RZF matrix is
\begin{equation}
	\begin{split}
		\widetilde{\mathbf W}_{\rm c}^{(0)}
		&=\mathbf H_{\rm c}^{\mathsf H}
		\left(\mathbf H_{\rm c}\mathbf H_{\rm c}^{\mathsf H}
		+\alpha_{\rm RZF}\mathbf I_U\right)^{-1},\\
		\alpha_{\rm RZF}
		&=(\alpha_0/U)\operatorname{tr}
		\left(\mathbf H_{\rm c}\mathbf H_{\rm c}^{\mathsf H}\right),
		\label{eq:rzf_initialization}
	\end{split}
\end{equation}
with columns normalized to $\vec{w}_u^{(0)}$.
Scaling the regularizer by the average channel gain through a dimensionless $\alpha_0>0$ makes the initialization invariant to absolute path loss. The sensing beam is the dominant right singular vector $\vec{w}_{\rm s}^{(0)}=\operatorname{eigvec}_{\max} (\mathbf H_{\rm T}^{\mathsf H}\mathbf H_{\rm T})$. For \eqref{eq:target_channel}, $\mathbf H_{\rm T}^{\mathsf H}\mathbf H_{\rm T} = |\alpha_{\rm T}|^2 N_{\rm r}\vec{a}_{\rm t}(\phi_{\rm T}) \vec{a}_{\rm t}^{\mathsf H}(\phi_{\rm T})$, so $\vec{w}_{\rm s}^{(0)}=\vec{a}_{\rm t}(\phi_{\rm T})/\sqrt{N_{\rm t}}$ is exactly target-matched. With $\omega_{\rm p} = \delta_{\rm p} + (1-2\delta_{\rm p})\omega$ for a power clip $0<\delta_{\rm p}<1/2$ (distinct from the dual slack $\delta$), the initial powers are $p_{\rm s}^{(0)}=(1-\omega_{\rm p})P_{\max}$ and $p_u^{(0)}=\omega_{\rm p}P_{\max}/U$, so no function starts at zero power and every stream stays in the span of the residual power head $f_u^{(0)}\ge\delta_{\rm p}/U>0$.

With initial-stage precoders $\vec{v}_i^{(0)}=\sqrt{p_i^{(0)}}\vec{w}_i^{(0)}$, the initial-stage transmit covariance is $\mathbf R_x^{(0)}=\vec{v}_{\rm s}^{(0)}(\vec{v}_{\rm s}^{(0)})^{\mathsf H}+\sum_{u=1}^{U}\vec{v}_u^{(0)}(\vec{v}_u^{(0)})^{\mathsf H}$, giving the initial-stage $\mathbf R_{\rm in}^{(0)}$ via \eqref{eq:interference_covariance} and the initial-stage interference-aware combiner as
\begin{equation}
	\vec{u}_{\rm s}^{(0)}
	=\vec{y}^{(0)}/\|\vec{y}^{(0)}\|,
	\qquad
	\vec{y}^{(0)}=(\mathbf R_{\rm in}^{(0)})^{-1}
	\mathbf H_{\rm T}\vec{w}_{\rm s}^{(0)}.
	\label{eq:mvdr_initialization}
\end{equation}
Although \eqref{eq:mvdr_initialization} omits the leakage $B$ in the denominator of \eqref{eq:sensing_sinr}, it is exactly optimal for the prior's own transmit design.

\begin{remark}[Optimality of the initial sensing combiner]
	\label{rem:combiner_optimality}
	By \eqref{eq:sensing_sinr_closed}, $\gamma^{\rm s}$ depends on $\vec{u}_{\rm s}$ only through $\rho_{\rm s}$ and is strictly increasing in it whenever $A>0$, with maximizer $\vec{u}_{\rm s}\propto\mathbf R_{\rm in}^{-1}\vec{a}_{\rm r}$. Since $\mathbf H_{\rm T}\vec{w}_{\rm s}^{(0)}\propto\vec{a}_{\rm r}$, that maximizer is \eqref{eq:mvdr_initialization}, which therefore maximizes $\gamma^{\rm s}$ for the transmit design that generated $\mathbf R_{\rm in}^{(0)}$. Once the network changes the powers or beams, $\mathbf R_{\rm in}$ changes and \eqref{eq:mvdr_initialization} is no longer the exact maximizer. $B$ affects the attainable value but not the optimal combining direction, so omitting it is 	lossless.
\end{remark}

\subsection{Physics-Defined Loss with Calibrated Constraints}

Substituting the normalized outputs into \eqref{eq:communication_sinr} and \eqref{eq:sensing_sinr} makes both utilities differentiable in $\boldsymbol\Gamma$. The constraints \eqref{eq:sensing_constraint}--\eqref{eq:user_rate_constraint} become $\bar g_i\leq0$ through the normalized residuals
\begin{equation}
	\bar g_{\rm s} = \frac{\Gamma_{\rm s}^{\min}-\gamma^{\rm s}}{\Gamma_{\rm s}^{\min}+\epsilon_0} , \qquad
	\bar g_u = \frac{R_u^{\min}-R_u}{R_u^{\min}+\epsilon_0},
	\label{eq:normalized_residuals}
\end{equation}
with violation $g_i=[\bar g_i]_+$ (not to be confused with the fading vector $\vec{g}_u$).

Constrained learning relaxes the per-realization requirements to $\mathbb E\{g_i\}\leq\varepsilon_i(\omega)$, $i\in\mathcal I=\{{\rm s},1,\ldots,U\}$~\cite{Chamon_TIT_2023,Eisen_TSP_2019}. Since \eqref{eq:isac_problem} is infeasible on some realizations, any $\varepsilon_i$ below the attainable floor leaves $\mathbb E\{g_i\}>\varepsilon_i$ at every iterate, and the multipliers diverge. For sensing, this is certified by Corollary~\ref{cor:range}: every design has $\mathbb E\{g_{\rm s}\}\ge\varepsilon_{\rm NL}$, so a target $\varepsilon_{\rm s}<\varepsilon_{\rm NL}$ raises the multiplier by at least $\eta_\mu(\varepsilon_{\rm NL}-\varepsilon_{\rm s})>0$ per ascent step in expectation.

We therefore measure the floor. The priority range $[\omega_{\min},\omega_{\max}]$ is split into $K$ bins centered at $\omega_k$, and in each bin the sensing violation is evaluated through \eqref{eq:sensing_sinr} at the sensing-only allocation with the baseline beams and combiner,
\begin{equation}
    \check g_{\rm s}(\omega_k)
    =\mathbb E\{g_{\rm s}\}\big|_{\mathcal A(\omega_k)},
    \label{eq:violation_floor}
\end{equation}
where $\mathcal A(\omega_k)$ denotes that allocation. By Theorem~\ref{thm:max_sensing}, it minimizes $\mathbb E\{g_{\rm s}\}$ over the power splits for these beams and combiner, and its mild $\omega$-dependence comes only from the combiner, which is built before the split. Unlike the analytic floor \eqref{eq:g_floor_analytic}, the measured floor therefore varies mildly with $\omega_k$, which the per-bin calibration absorbs. Evaluating the same setup at $\omega=0.5$ gives the per-realization floor $\check g_{\rm s}^{(b)}$, and a slack $\delta>0$ is reserved for the sensing dual.

The loss applies a Powell--Hestenes--Rockafellar (PHR) penalty to the clipped violations, with multiplier ascent~\cite{Tian_JIOT_2023,Park_AAAI_2023}, a geometric penalty schedule capped at $\kappa_{\max}$~\cite{Park_AAAI_2023}, and, unlike these works, a box on the multipliers whose cap follows the objective scale. Two refinements are added. First, the multipliers are carried per priority bin as $\mu_i(\omega_k)$, and each realization is mapped to bin index

\begin{equation}
    k(\omega) = \min \left\{ \left \lfloor K\, \frac {\omega - \omega_{\min}} {\omega_{\max} - \omega_{\min}} \right \rfloor + 1,\; K \right\}
    \label{eq:omega_bin}
\end{equation}
since the difficulty of the sensing constraint varies strongly with $\omega$. Second, only the violation above the measured floor is charged, through the achievability gap
\begin{equation}
    \tilde g_{\rm s}^{(b)}
    =\bigl[g_{\rm s}^{(b)}-\check g_{\rm s}^{(b)}\bigr]_+,
    \label{eq:achievability_gap}
\end{equation}
which vanishes iff the allocation reaches the floor, while worse allocations on infeasible draws keep a gradient toward it.

The sensing term is thus the PHR penalty of $g_{\rm s}\le\check g_{\rm s}^{(b)}$, which reduces to \eqref{eq:sensing_constraint} where $\check g_{\rm s}^{(b)}=0$, and the loss reads
\begin{equation}
\begin{split}
    \mathcal L_{\rm PI\text{-}UnSL}
    ={}&-\omega\widetilde{\mathcal C}
    -(1-\omega)\widetilde{\mathcal S}
    +\mu_{\rm s}(\omega_{k})\tilde g_{\rm s}
    +(\kappa/2)\tilde g_{\rm s}^2
    \\
    &+\sum_{u=1}^{U}
    \left(\mu_u(\omega_{k})g_u+(\kappa/2)g_u^2\right)
    \\
    &+\lambda_{\rm phy} \mathcal{L}_{\rm phy} + \lambda_{\rm L1}\mathcal L_{\rm L1},
    \label{eq:unsl_augmented_loss}
\end{split}
\end{equation}
with multipliers $\mu_i\geq0$, penalty coefficient $\kappa>0$, and the sample index $(b)$ suppressed.

Each term is the active branch of the PHR penalty on a clipped violation, so a constraint met with slack contributes no gradient, and the normalizations make power and norm penalties unnecessary. Because the square is taken per sample, the average penalizes $\mathbb E\{g_i^2\}=(\mathbb E\{g_i\})^2+\operatorname{var}(g_i)$ and hence also the spread of the violations, as a per-realization requirement demands.

$\mathcal L_{\rm phy}$ regularizes the latent toward what $\vec{q}_{\rm sys}$ alone predicts,
\begin{equation}
\begin{split}
\mathcal{L}_{\rm phy} = & D_{\rm KL} \! \left( \mathcal N (\vec{\mu}_x, \operatorname{diag} \vec{\sigma}_x^{2}) \, \middle \| \, \mathcal N (\vec{\mu}_q, \operatorname{diag} \vec{\sigma}_q^{2}) \right) \\
= & \frac{1} {2d_z} \sum_{d=1}^{d_z} \left[ \log \frac{ \sigma_{q,d}^{2}} {\sigma_{x,d}^{2}} + \frac{ \sigma_{x,d}^{2} + (\mu_{x,d} - \mu_{q,d})^{2}} {\sigma_{q,d}^{2}} -1 \right],
\end{split}
\end{equation}
where the encoder in~\eqref{eq:multitask_network} is made variational, $\vec{z}\sim\mathcal N(\vec{\mu}_x,\operatorname{diag}\vec{\sigma}_x^2)=E_{\boldsymbol\Gamma_{\rm e}}(\mathcal X_{\rm in})$, sampled by reparameterization~\cite{Kingma_ICLR_2014}, and a second branch $E_{\boldsymbol\Gamma_{\rm q}}(\vec{q}_{\rm sys})$ gives $(\vec{\mu}_q,\vec{\sigma}_q^2)$. Unlike a variational autoencoder, no reconstruction is trained, and the latent only couples the encoder to $\vec{q}_{\rm sys}$. Our companion work uses the same machinery for channel estimation~\cite{Nguyen_TGCN_2026}.

$\mathcal{L}_{\rm L1}=\frac{1}{|\mathcal F|}\sum_{f\in\mathcal F}\frac{1}{|\vec{q}_{\rm sys}|}\sum_{j=1}^{|\vec{q}_{\rm sys}|} s_j\big\|\vec{w}_{j}^{f}\big\|_2$, with $\mathcal F=\{M_{\boldsymbol\Gamma_{\rm m}},D_{\boldsymbol\Gamma_{\rm p}},D_{\boldsymbol\Gamma_{\rm t}},D_{\boldsymbol\Gamma_{\rm r}},E_{\boldsymbol\Gamma_{\rm q}}\}$, measures which $\vec{q}_{\rm sys}$ features the trained modules use, where $\vec{w}_j^f$ is the input-layer weight column of module $f$ for entry $j$ and $s_j$ is the standard deviation of weighted entry $j$, estimated on training batches. We refer to the closed-form statistics in $\vec{q}_{\rm sys}$, their weighting, $\mathcal L_{\rm phy}$, and $\mathcal L_{\rm L1}$ together as the physics conditioning.

The parameters are updated by Adam with rate $\eta_{\boldsymbol\Gamma}$ on the mini-batch mean, $\boldsymbol{\Gamma} \leftarrow \boldsymbol{\Gamma} - \eta_{\boldsymbol {\Gamma}} \nabla_{ \boldsymbol{\Gamma}} \bigl( |\mathcal B|^{-1} \sum_{b\in\mathcal B} \mathcal L_{\rm PI \text{-}UnSL}^{(b)} \bigr)$, and the multipliers ascend once per epoch in each bin over its epoch sample set $\mathcal E_k$,
\begin{equation}
\begin{split}
    \mu_{\rm s}(\omega_k)&\leftarrow
    \min\left\{
    \left[\mu_{\rm s}(\omega_k)+\eta_{\mu}
    \left(\tilde g_{\rm s}^{\rm avg}-\delta\right)\right]_+,
    \bar\mu\right\},
    \\
    \mu_u(\omega_k)&\leftarrow
    \min\left\{
    \left[\mu_u(\omega_k)+\eta_{\mu}
    \left(\mathrm{CVaR}_\alpha^{k}\{g_u\}-\varepsilon_u\right)\right]_+,
    \bar\mu\right\},
    \label{eq:dual_update}
\end{split}
\end{equation}
where $\tilde g_{\rm s}^{\rm avg} = |\mathcal E_k|^{-1} \sum_{b \in \mathcal E_k} \tilde g_{\rm s}^{(b)}$ and the per-user tail statistic is the sample conditional value at risk
\begin{equation}
\begin{split}
    \mathrm{CVaR}_\alpha^{k}\{g_u\}
    &=\frac{1}{m_k}\sum_{b\in\mathcal T_k}g_u^{(b)},\\
    m_k&=\max\big\{1,\operatorname{round}(\alpha|\mathcal E_k|)\big\},
    \label{eq:cvar}
\end{split}
\end{equation}
with $\mathcal T_k$ the $m_k$ realizations of largest $g_u^{(b)}$ in bin $k$, per user.
The sensing dual ascends on the same gap that the penalty charges, with slack target $\delta$. The user dual ascends on the CVaR tail instead, since a mean-tracking dual is blind to rare severe violations, and by the dual representation of CVaR~\cite{Chow_NIPS_2015} it tracks the worst-case distribution in an uncertainty set, like an outage constraint without a closed-form bound. Its target is $\varepsilon_u=\check g_u+\delta_u$, with $\check g_u$ the CVaR at the communication-only extreme. Both updates use clipped violations, since signed residuals would let easy realizations pin the multipliers at zero.

The multiplier cap follows the objective scale,
\begin{equation}
    \bar\mu=\min\Big\{\mu_{\max},\;
    \varrho\,\widehat{\mathcal U}\big/\bar\varepsilon\Big\},
    \qquad
    \bar\varepsilon=\frac{1}{K}\sum_{k=1}^{K}
    \big[\check g_{\rm s}(\omega_k)+\delta\big],
    \label{eq:multiplier_cap}
\end{equation}
where $\widehat{\mathcal U}$ is the mean joint utility of the latest training epoch, $\varrho\in(0,1)$ bounds the penalty as a fraction of the objective, $\mu_{\max}$ is a loose absolute safeguard, and $\bar\varepsilon$ inherits the calibration. The penalty grows geometrically via
\begin{equation}
\begin{split}
    &\kappa\leftarrow\min\{\nu\kappa,\kappa_{\max}\}\quad\text{if}\\
    &\max\{\tilde g_{\rm s}^{\rm val},\, g_{\rm c}^{\max}\}
    >\varsigma\max\{\tilde g_{\rm s}^{\rm val,prev},\, g_{\rm c}^{\max,\rm prev}\},
\end{split}
\label{eq:penalty_update}
\end{equation}
with $\nu>1$ and $0<\varsigma<1$, where $\tilde g_{\rm s}^{\rm val}$ is the mean validation gap and $g_{\rm c}^{\max}=\max_{b,u}g_u^{(b)}$ is the largest validation user violation. If this statistic does not improve on its best value by a relative $2\%$ for $T_{\rm f}$ consecutive epochs, the dual freezes and $\kappa$ stops growing. This freeze rule is a plateau detector that stops further multiplier growth.
The cap $\kappa_{\max}$ keeps the $O(1)$ utility visible to $\omega$, since otherwise the penalty dominates and the network collapses onto a single operating point.

\begin{remark}[Physics in the loss]
\label{rem:physics_loss}
The physics enters \eqref{eq:unsl_augmented_loss} at three levels: the utilities and violations are evaluated through the SINR model \eqref{eq:communication_sinr}, \eqref{eq:sensing_sinr}, the dual targets are calibrated to the floor that Theorem~\ref{thm:max_sensing} makes measurable, and $\mathcal L_{\rm phy}$ ties the latent to closed-form statistics. Because every dual target is attainable by construction, the loss remains a bounded augmented Lagrangian rather than degenerating into a divergent penalty.
\end{remark}

\subsection{Training Procedure and Calibrated Dual Properties}
\label{subsec:convergence}

\begin{algorithm}[!t]
\footnotesize
\SetCommentSty{algcomment}
\caption{PI-UnSL Training and One-Pass Inference (physics-informed steps in green)}
\label{alg:mi_unsl}
\KwIn{Channel generator $\mathcal G$ of Section~\ref{sec:system_model}; epochs $N_{\rm unsl}$; rates $\eta_{\boldsymbol{\Gamma}},\eta_\mu$; slacks $\delta,\delta_u$; CVaR level $\alpha$; cap fraction $\varrho$; penalty $\kappa\in[\kappa_{\min},\kappa_{\max}]$, $\nu>1$, $\varsigma$; patience $T_{\rm f}$; QoS target $\varrho_{\rm QoS}$; $K$ bins over $[\omega_{\min},\omega_{\max}]$.}
\KwOut{Trained $\boldsymbol{\Gamma}$ and a one-pass ISAC mapping.}
\tcp{Offline calibration}
\pinov{Evaluate the baseline at $\omega\in\{0,1\}$ to set $\mathcal C^{\rm ref},\mathcal S^{\rm ref}$}\tcp*[r]{\pinov{utility scales}}
\pinov{Measure $\check g_{\rm s}(\omega_k)$ and $\check g_{\rm s}^{(b)}$ at the sensing-only allocation by \eqref{eq:violation_floor}; set $\varepsilon_u=\check g_u+\delta_u$ and $\bar\varepsilon$}\tcp*[r]{\pinov{Thm.~\ref{thm:max_sensing} floor}}
\pinov{Zero-initialize the head output layers, so the untrained mapping equals the baseline}\tcp*[r]{\pinov{no labels}}
Set $\mu_{\rm s}(\omega_k)=\mu_u(\omega_k)=0$ and $\kappa=\kappa_{\min}$\;
\tcp{Label-free training}
\For{$e=1,\ldots,N_{\rm unsl}$}{
  \For{each mini-batch $\mathcal B\sim\mathcal G$}{
    Sample $\omega\sim\mathcal U[\omega_{\min},\omega_{\max}]$ per realization and form $\mathcal X_{\rm in}$\;
    \pinov{Build $\{\vec{w}_i^{(0)},\vec{u}_{\rm s}^{(0)},f_i^{(0)}\}$ by \eqref{eq:rzf_initialization}--\eqref{eq:mvdr_initialization}}\tcp*[r]{\pinov{closed-form baseline}}
    \pinov{Compute $\mathbf G_{\rm comm},g_{\rm sens},\vec{l}_{\rm sens},\mu_{\rm feas}$ at the baseline and form $\vec{w}_{\rm sys}\odot\vec{q}_{\rm sys}$ by \eqref{eq:qsys}}\tcp*[r]{\pinov{physics statistics}}
    Forward pass \eqref{eq:multitask_network}: encoder, FiLM conditioning, and the three heads\;
    \pinov{Correct the baseline by \eqref{eq:residual_prediction} and normalize by \eqref{eq:power_projection_unsl}--\eqref{eq:beam_projection_unsl}}\tcp*[r]{\pinov{exact normalization}}
    \pinov{Evaluate $R_u$ and $\gamma^{\rm s}$ through \eqref{eq:communication_sinr}, \eqref{eq:sensing_sinr}; form $g_i$ and $\tilde g_{\rm s}$ by \eqref{eq:normalized_residuals}, \eqref{eq:achievability_gap}}\tcp*[r]{\pinov{physics loss}}
    Update $\boldsymbol{\Gamma}$ by Adam on the batch mean of \eqref{eq:unsl_augmented_loss}, including $\mathcal L_{\rm phy}$ and $\mathcal L_{\rm L1}$\;
  }
  \If{dual not frozen}{
    \pinov{Ascend $\mu_{\rm s}(\omega_k)$ on the mean gap and $\mu_u(\omega_k)$ on the CVaR tail by \eqref{eq:dual_update}, \eqref{eq:cvar}, capped by \eqref{eq:multiplier_cap}}\tcp*[r]{\pinov{calibrated dual}}
    Grow $\kappa$ by \eqref{eq:penalty_update}\;
  }
  \If{no improvement for $T_{\rm f}$ epochs}{
    Freeze the dual and hold $\kappa$\;
  }
  Keep the checkpoint of maximum validation utility with joint QoS $\ge\varrho_{\rm QoS}$ (of maximum joint QoS if none qualifies); stop early if it no longer improves\;
}
\tcp{One-pass inference: no labels and no iterations}
For the current CSI and requested $\omega$, \pinov{build the baseline and $\vec{q}_{\rm sys}$}, run one forward pass, normalize, and return $\{p_{\rm s},p_u,\vec{w}_{\rm s},\vec{w}_u,\vec{u}_{\rm s}\}$ meeting \eqref{eq:isac_power_constraint}--\eqref{eq:combiner_constraint}\;
\end{algorithm}

Algorithm~\ref{alg:mi_unsl} summarizes the procedure, and Fig.~\ref{fig:pipeline} illustrates it. The physics-informed steps, which carry the contributions of this work, are highlighted in green. The zero-initialized heads make the first pass identical to the analytical design, after which the network is fine-tuned label-free on \eqref{eq:unsl_augmented_loss}.

\begin{figure*}[!t]
    \centering
    \resizebox{\textwidth}{!}{%
    \begin{tikzpicture}[
        block/.style={draw, rounded corners, align=center,
                      minimum height=0.8cm, font=\footnotesize, inner sep=3pt},
        head/.style={draw, rounded corners, align=center, minimum width=2.5cm,
                      minimum height=0.5cm, font=\scriptsize, inner sep=2pt},
        lossblock/.style={draw, rounded corners, align=center,
                      minimum height=0.7cm, font=\scriptsize, inner sep=3pt},
        sum/.style={draw, circle, minimum size=0.55cm, font=\footnotesize, inner sep=0pt},
        arr/.style={-{Latex[length=2mm]}, thick},
        darr/.style={-{Latex[length=2mm]}, thick, dashed, red!60!black},
        garr/.style={-{Latex[length=2mm]}, thick, dashed},
    ]
    \node[block, minimum width=2.3cm] (xin) at (0,0) {Channel tensor\\ $\mathcal X_{\rm in}$};
    \node[block, minimum width=2.3cm] (enc) at (2.9,0)
        {Encoder $E_{\boldsymbol\Gamma_{\rm e}}$\\ $\vec{z}\sim\mathcal N(\vec{\mu}_x,\vec{\sigma}_x^2)$};
    \node[sum] (mod) at (5.2,0) {$\widetilde{\vec{z}}$};
    \node[head] (hp) at (7.6,0.62) {Power head $D_{\boldsymbol\Gamma_{\rm p}}$};
    \node[head] (ht) at (7.6,0)    {Beam head $D_{\boldsymbol\Gamma_{\rm t}}$};
    \node[head] (hr) at (7.6,-0.62) {Combiner head $D_{\boldsymbol\Gamma_{\rm r}}$};
    \node[block, minimum width=2.2cm, minimum height=1.75cm] (prior) at (10.5,0)
        {Closed-form baseline\\ (RZF, matched\\ beam, MVDR)\\ $+$ residual~\eqref{eq:residual_prediction}};
    \node[block, minimum width=2.2cm] (proj) at (13.1,0) {Exact\\ normalizations};
    \node[block, minimum width=2.3cm] (out) at (15.7,0)
        {Output design\\ $(p_i,\vec{w}_i,\vec{u}_{\rm s})$};

    \draw[arr] (xin) -- (enc);
    \draw[arr] (enc) -- (mod);
    \draw[arr] (mod.north east) -- (hp.west);
    \draw[arr] (mod) -- (ht);
    \draw[arr] (mod.south east) -- (hr.west);
    \node[font=\tiny, anchor=north] at (5.95,-0.72) {$[\widetilde{\vec{z}};\vec{q}_{\rm sys}]$};
    \draw[arr] (hp.east) -- (hp.east -| prior.west);
    \draw[arr] (ht.east) -- (ht.east -| prior.west);
    \draw[arr] (hr.east) -- (hr.east -| prior.west);
    \draw[arr] (prior) -- (proj);
    \draw[arr] (proj) -- (out);
    \draw[arr, dashed] (10.5,1.35) node[above, font=\scriptsize] {CSI}
        -- (prior.north);
    \draw[arr, dashed] (-2.35,0) node[left, font=\scriptsize] {CSI}
        -- (xin.west);

    \node[block, minimum width=2.3cm] (qsys) at (0,1.4)
        {Physics vector\\ $\vec{w}_{\rm sys}\odot\vec{q}_{\rm sys}$~\eqref{eq:qsys}};
    \node[block, minimum width=2.3cm] (film) at (5.2,1.4)
        {FiLM $M_{\boldsymbol\Gamma_{\rm m}}$\\ $(\vec{\alpha}_{\rm m},\vec{\beta}_{\rm m})$};
    \draw[arr] (qsys) -- (film);
    \draw[arr] (film) -- (mod);

    \node[block, minimum width=2.3cm] (prq) at (0,-1.4)
        {Prior branch $E_{\boldsymbol\Gamma_{\rm q}}$\\ $(\vec{\mu}_q,\vec{\sigma}_q^2)$};
    \node[lossblock, minimum width=2.0cm] (kl) at (2.9,-1.4) {$\mathcal L_{\rm phy}=D_{\rm KL}$};
    \node[lossblock, minimum width=2.5cm] (l1) at (7.6,-1.4)
        {$\mathcal L_{\rm L1}$ (group lasso)};
    \node[lossblock, minimum width=3.3cm, dashed] (loss) at (11.8,-1.4)
        {Training loss\\ $\mathcal L_{\rm PI\text{-}UnSL}$~\eqref{eq:unsl_augmented_loss}};
    \node[lossblock, minimum width=2.3cm] (phr) at (15.7,-1.4)
        {Utility and\\ PHR penalty terms};

    \draw[darr] (qsys.west) -- ++(-0.45,0) |- (prq.west);
    \draw[darr] (prq) -- (kl);
    \draw[darr] (enc) -- (kl);
    \draw[darr] (hr) -- (l1);
    \draw[darr] (out) -- (phr);
    \draw[darr] (phr) -- (loss);
    \draw[darr] (l1) -- (loss);
    \draw[darr] (kl.south) -- ++(0,-0.3) -| ([xshift=-0.9cm]loss.south);
    \draw[garr] ([xshift=0.9cm]loss.south) -- ++(0,-0.6) -| (mod.south)
        node[pos=0.8, left, font=\scriptsize] {$\nabla_{\boldsymbol\Gamma}$};
    \end{tikzpicture}%
    }
    \caption{PI-UnSL forward pass (solid arrows) and training loop; dashed red arrows feed the losses, and the dashed black arrow back-propagates them into $\boldsymbol\Gamma$.}
    \label{fig:pipeline}
\end{figure*}

Two properties of the training regime follow, neither of which claims global optimality of the NP-hard problem \eqref{eq:isac_problem}. First, the projection in \eqref{eq:dual_update} keeps every multiplier in $[0,\bar\mu]$. With $\mathbb E_k\{\cdot\}$ the expectation over bin $k$, if the bin residuals $\Upsilon_{{\rm s},k}(\vec{\mu})=\mathbb E_k\{\tilde g_{\rm s}\}$ and $\Upsilon_{u,k}(\vec{\mu})=\mathrm{CVaR}_\alpha^k\{g_u\}$ depend continuously on the multipliers, the capped dual map has a fixed point in $[0,\bar\mu]^{K(1+U)}$. There, each multiplier is zero, balances its target, or sits at the cap, so an unmeetable constraint is caught at the cap instead of diverging. In the reported runs the dual freezes early (Section~\ref{subsec:results}), so this argument bounds the multipliers rather than describing most of training.

Second, the gap penalty reduces to the standard PHR penalty of \eqref{eq:sensing_constraint} wherever the constraint is attainable ($\check g_{\rm s}^{(b)}=0$), and on an infeasible realization it vanishes only at the floor, elsewhere keeping the gradient $-\nabla\gamma^{\rm s}/(\Gamma_{\rm s}^{\min}+\epsilon_0)$ toward it. Both properties concern training stability, not outcome quality. The reported mapping, the checkpoint of maximum validation utility among the epochs meeting $\varrho_{\rm QoS}$, comes after the dual freezes, and its gap to an iterative primal--dual solver is measured in Section~\ref{sec:results}.

\section{Numerical Results and Discussion}
\label{sec:results}

\begin{table}[!tb]
	\centering
	\caption{Simulation Parameters}
	\label{tab:sim_params}
	\footnotesize
	\begin{tabular}{@{}ll@{}}
		\hline
		Parameter & Value \\
		\hline
		$N_{\rm t}=N_{\rm r}$, $U$, $L_{\rm cl}$ & $8$, $4$, $4$ \\
		Carrier, bandwidth & $28$~GHz, $100$~MHz \\
		$P_{\max}$, $T_{\rm b}$, $G_{\rm p}=N_{\rm s}$ & $30$~dBm, $100~\mu$s, $256$ \\
		Element gains $G_{\rm t}=G_{\rm r}$ & $1$ \\
		Comm.\ / sensing noise figure & $7$~dB / $10$~dB \\
		Residual SI, Rician $K_u$, spatial corr. & $-110$~dB, $5$~dB, $0.55$ \\
		Path-loss exponent, shadowing & $2.2$, $4$~dB \\
		User range, speed, sector & $25$--$120$~m, $\le 30$~m/s, $\pm65^\circ$ \\
		Target range, speed, sector & $20$--$100$~m, $\le 35$~m/s, $\pm60^\circ$ \\
		Target RCS median (mean), std & $10$ ($12.7$)~m$^2$, $3$~dB \\
		Clutter range, RCS, speed std & $10$--$150$~m, $0.05$--$1$~m$^2$, $2$~m/s \\
		RZF regularization $\alpha_0$ & $0.01$ \\
		$\Gamma_{\rm s}^{\min}$, $R_u^{\min}$ & $12$~dB, $0.25$~bit/s/Hz \\
		Power clip $\delta_{\rm p}$, dual slack $\delta$ & $0.05$, $0.01$ \\
		\hline
	\end{tabular}
\end{table}

Simulations use the parameters in Table~\ref{tab:sim_params}, with $2\,500$ tracks of $8$ consecutive blocks ($0.8$~ms per track), which are split by track into $15\,000$ training, $2\,496$ validation, and $2\,504$ testing blocks. The $G_{\rm p}=N_{\rm s}=256$ samples occupy $2.56~\mu$s at $100$~MHz, so $N_{\rm s}T_{\rm s}\ll T_{\rm b}$.
The track structure serves the split and the outdated-CSI evaluation. The mobility evaluation uses longer tracks, described below.
Fig.~\ref{fig:isac_topology} shows one test realization, where PI-UnSL keeps its main lobe on the target while lifting the pattern toward the users.

\begin{figure}[!tb]
    \centering
    \includegraphics[width=\columnwidth]{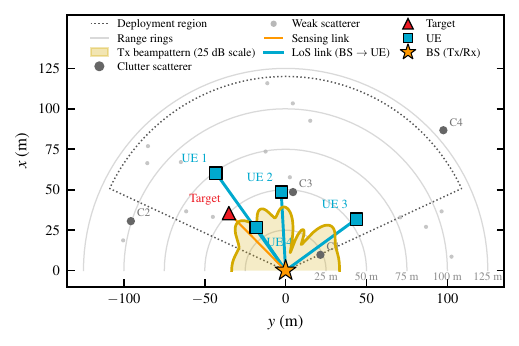}
    \caption{Example test scene with the normalized PI-UnSL transmit beampattern at $\omega=0.45$ ($25$~dB scale); weak scatterers are illustrative only.}
    \label{fig:isac_topology}
\end{figure}

\begin{figure}[!tb]
	\centering
	\includegraphics[width=\columnwidth]{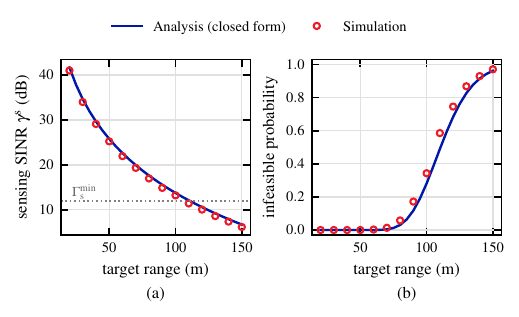}
	\caption{Closed form against simulation at the sensing-only allocation. (a) $\gamma_{\rm NL}$ and sensing SINR. (b) Infeasible probability.}
	\label{fig:validation}
\end{figure}

\begin{figure}[!t]
	\centering
	\includegraphics[width=\columnwidth]{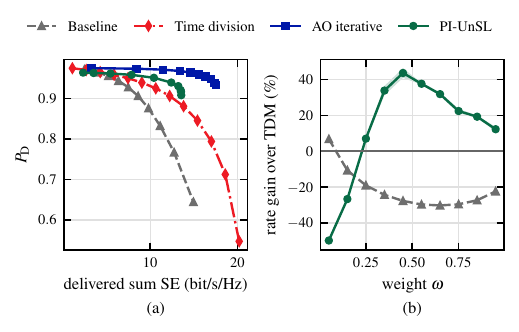}
	\caption{PI-UnSL against time-division ISAC at $P_{\rm FA}=10^{-4}$. (a) $P_{\rm D}$ against delivered sum SE. (b) Rate gain at matched $P_{\rm D}$ (AO omitted).}
	\label{fig:methods}
\end{figure}

\begin{figure}[!tb]
	\centering
	\includegraphics[width=\columnwidth]{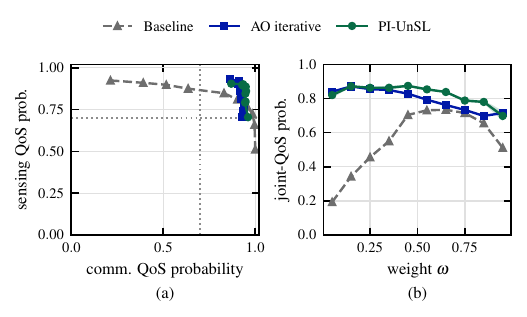}
	\caption{Dual-function performance. (a) Sensing vs. communication QoS over $\omega$. (b) Probability one realization satisfies all requirements at once.}
	\label{fig:tradeoff}
\end{figure}

\begin{figure}[!tb]
	\centering
	\includegraphics[width=.8\columnwidth]{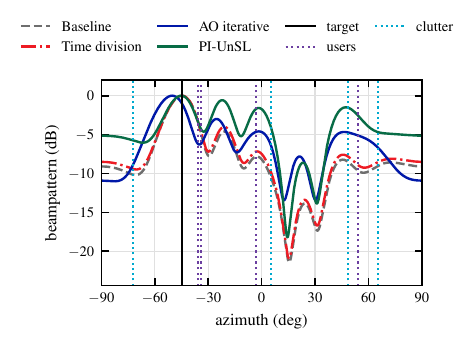}
	\caption{Normalized transmit beampattern at $\omega=0.45$ on one test realization. Time division shows its duty-cycle-averaged pattern at $\rho_{\rm td}=0.5$.}
	\label{fig:beamforming}
\end{figure}

\begin{figure*}[!tb]
	\centering
	\includegraphics[width=2\columnwidth]{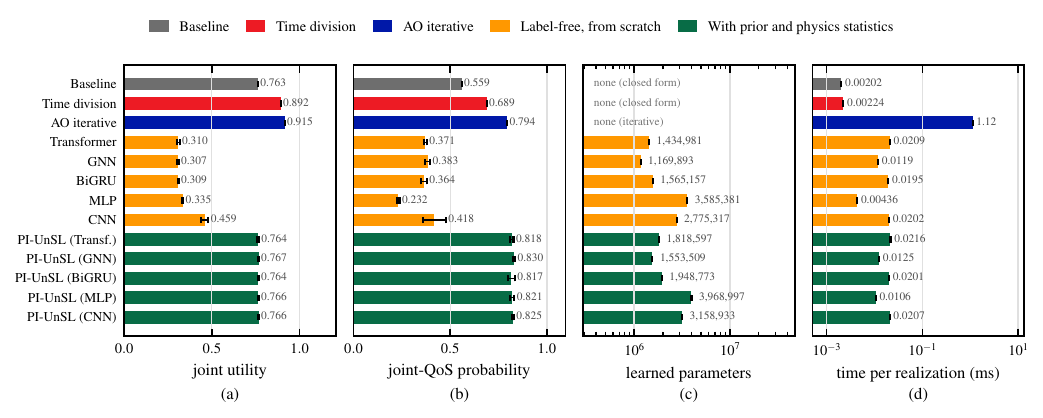}
	\caption{PI-UnSL against label-free networks trained from scratch and with the prior, over three seeds: (a) joint utility, (b) joint-QoS probability, (c) learned parameters, (d) time per realization. Time division is scored per block at $\rho_{\rm td}=\omega_k$.}
	\label{fig:architectures}
\end{figure*}

\begin{figure}[!t]
	\centering
	\includegraphics[width=\columnwidth]{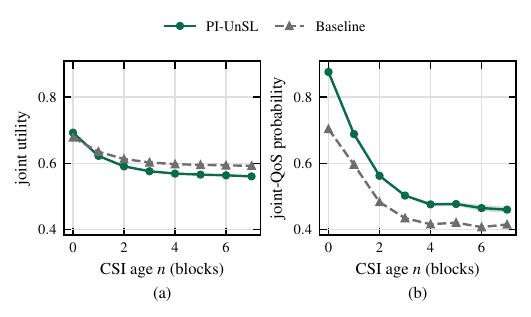}
	\caption{PI-UnSL and the baseline designed on CSI that is $n$ blocks old, at $\omega=0.45$: (a) joint utility, (b) joint-QoS probability.}
	\label{fig:outdated}
	\medskip
	\includegraphics[width=\columnwidth]{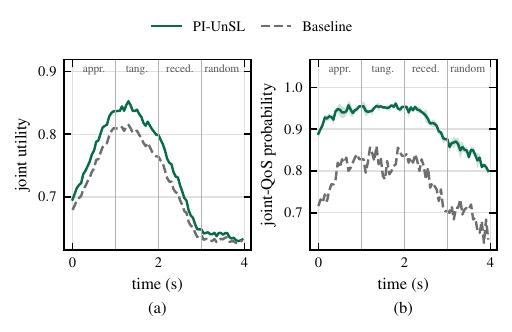}
	\caption{PI-UnSL and the baseline over four heading segments of $1$~s, per-block CSI, $\omega=0.45$: (a) joint utility, (b) joint-QoS probability.}
	\label{fig:mobility}
\end{figure}

\begin{table}[!tb]
	\centering
	\caption{Measured Floor and Infeasible Fraction}
	\label{tab:floor}
	\footnotesize
	\begin{tabular}{@{}cccc@{}}
		\hline
		$\omega_k$ & Floor $\check g_{\rm s}(\omega_k)$ & Infeasible & Floor $+\delta$ \\
		\hline
		$0.095$ & $0.014$ & $4.0\%$ & $0.024$ \\
		$0.365$ & $0.015$ & $4.0\%$ & $0.025$ \\
		$0.635$ & $0.016$ & $4.5\%$ & $0.026$ \\
		$0.905$ & $0.019$ & $6.1\%$ & $0.029$ \\
		\hline
		\multicolumn{4}{p{165pt}}{Bins $1,4,7,10$ of $K=10$, with $800$ validation realizations per bin, and $\bar\varepsilon$ averages all ten. Last column is the cap scale, not the dual target.}
	\end{tabular}
\end{table}

Table~\ref{tab:floor} reports the measured sensing violation floor. It is small and rises only mildly with $\omega$, through the combiner, and only a small fraction of realizations is infeasible. Since some of them fail only because of the fixed baseline beams (the matched beam of Theorem~\ref{thm:max_sensing} need not be optimal), the measured floor is a safe upper bound for calibrating the dual. The binding limit is $P_{\max}$ under RCS fading, not a clutter ceiling.

Fig.~\ref{fig:validation} validates the analysis of Section~\ref{subsec:feasibility} against Monte Carlo simulation at the sensing-only allocation, with clutter and residual SI. The closed-form bound \eqref{eq:noise_limited} closely tracks and slightly exceeds the simulated sensing SINR at every range, and the infeasible probability predicted by Corollary~\ref{cor:range} follows the simulated trend. The sensing constraint becomes infeasible only for distant targets, which produces the floor of Table~\ref{tab:floor}.

We also compare three sensing-multiplier rules in a single-seed run of a reduced configuration, with the physics statistics but without $\mathcal L_{\rm phy}$, $\mathcal L_{\rm L1}$, and the $\vec{q}_{\rm sys}$ weighting, and with the mean rather than the largest validation user violation in \eqref{eq:penalty_update}. With the target below the floor and the safeguards removed, the multiplier grows linearly without bound, whereas with the calibrated target and the safeguards it freezes well inside its cap. Their performance is nearly identical, so the calibration acts as a stability safeguard rather than a source of performance.

\label{subsec:results}

Training uses $15\,000$ unlabeled snapshots, $\omega \sim \mathcal U[0.05,0.95]$, Adam at $\eta_{\boldsymbol \Gamma} = 10^{-3}$ on a cosine schedule, batch $256$, $\kappa \in [5,50]$, $\eta_\mu = 0.5$, $K = 10$, $\delta_u = 0.05$, $\alpha = 0.10$, $\varrho = 0.30$, $T_{\rm f} = 8$, $\nu = 1.5$, $\varsigma = 0.9$, $\mu_{\max} = 100$, and a checkpoint maximizing validation utility among epochs with joint QoS $\ge0.85$. When no epoch reaches this target, as for every blank-slate network below, the epoch of maximum validation joint QoS is kept.
Additionally, the reported NN carries the physics-consistency term $\mathcal L_{\rm phy}$ at $\lambda_{\rm phy}=0.001$, ramped linearly from zero over the first $20$ epochs, and the group-lasso term $\mathcal L_{\rm L1}$ at $\lambda_{\rm L1}=0.1$ of \eqref{eq:unsl_augmented_loss}, together with the weighted $\vec{q}_{\rm sys}$ of \eqref{eq:qsys}. Training runs for at most $300$ epochs, with early stopping after $40$ epochs without improvement.
Under this schedule, $\kappa$ reaches $\kappa_{\max}$ within about ten epochs and the dual freezes soon after, with multipliers far below their cap. The selected checkpoints all come after the freeze, so the reported network is trained mostly under a fixed penalty after a short dual warm-up.

The baseline is the closed-form design of Section~\ref{subsec:model_prior}, that is, RZF user beams, the target-matched sensing beam, the interference-aware combiner, and the priority-dependent power split with equal user powers, which is also the untrained PI-UnSL output.
We also benchmark against time-division multiplexing (TDM) ISAC, which avoids interference by dedicating a fraction $\rho_{\rm td}$ of the coherence block to full-power communication (scaling data rate by $\rho_{\rm td}$) and $1-\rho_{\rm td}$ to full-power sensing (reducing coherent gain to $(1-\rho_{\rm td}) G_{\rm p}$) \cite{Zhang_TWC_2026,Du_COMST_2026}.
Performance is evaluated via radar probability of detection $P_{\rm D} = Q_1 (\sqrt{2 \gamma^{\rm s}}, \sqrt{-2 \ln P_{\rm FA}})$, with Marcum $Q$-function $Q_1(\cdot,\cdot)$ and false-alarm probability $P_{\rm FA}$, and user Gray-coded QPSK bit error rate. The threshold $R_u^{\min}$ is a minimum service rate for a coded cell-edge link rather than for an uncoded data link, so the uncoded BER below serves only to compare the interference levels of the designs.

The AO reference starts from the baseline and alternates exact MVDR combiner updates with augmented-Lagrangian gradient steps on the powers and beams ($25$ outer iterations of $8$ steps), with the multipliers shared across the realizations of the same $\omega$.
It penalizes the raw sensing violation rather than the achievability gap, so it is an uncalibrated reference rather than a strict upper bound.

Since the untrained PI-UnSL reproduces the baseline, the separation between the two curves in Figs.~\ref{fig:methods}--\ref{fig:beamforming} is the gain from training.
In Fig.~\ref{fig:methods}(a), PI-UnSL lies above the baseline frontier except at the most sensing-heavy priorities and delivers more rate than TDM at matched $P_{\rm D}$ over most of the priority range. Fig.~\ref{fig:methods}(b) shows that this gain peaks at balanced priorities and reverses only in the most sensing-heavy regime, where sharing the block brings little. The price is interference: TDM keeps an uncoded QPSK BER of about $6\times10^{-3}$, against $0.07$--$0.26$ for PI-UnSL, which accepts mutual interference over the whole block for a higher total rate.

Fig.~\ref{fig:tradeoff} explains this frontier. The baseline meets both QoS requirements together only at intermediate priorities, whereas PI-UnSL does so over most of the priority range and raises the joint-QoS probability substantially, most of all in the sensing-heavy regime. Fig.~\ref{fig:beamforming} shows the mechanism on one realization. All designs keep the main lobe on the target, but the baseline and TDM leave some user directions several dB lower because of equal RZF power sharing, whereas PI-UnSL lifts the pattern toward those users.

\label{subsec:architectures}

In Fig.~\ref{fig:architectures}, all NNs share the loss, dual, measured floor, optimizer, schedule, and checkpoint rule, over three seeds. The blank-slate networks see only the raw channels and the system constants of $\vec{q}_{\rm sys}$, while the prior-equipped ones also carry the physics conditioning. All five architectures fail comparably from scratch and perform alike once initialized and conditioned. A CNN with the prior alone recovers most of the joint QoS, but at a utility below the baseline, which the physics conditioning restores, so the physical knowledge, not the architecture, makes the problem learnable. Neither this CNN nor any blank-slate network reaches the $0.85$ validation target, so their checkpoints maximize validation joint QoS.
Time division attains a higher mean utility than PI-UnSL but a clearly lower joint QoS, and AO a higher utility and rate but a lower joint QoS over most priorities, at many times the latency. As AO is not optimal, its margin understates the utility the single pass forfeits.

\label{subsec:outdated}

Under channel aging \eqref{eq:gauss_markov}, designs computed at block $t$ are scored at block $t+n$ (Fig.~\ref{fig:outdated}). At $\omega=0.45$, PI-UnSL loses mean utility slightly faster than the baseline, falling below it from the first lag, but keeps a higher joint-QoS probability at every lag, as it does on average over the $19$ evaluated priorities. Its reliability advantage therefore survives channel staleness, although most of the margin is lost within the first block, and from five blocks on PI-UnSL falls slightly below the baseline at the most sensing-heavy priorities, $\omega\le0.25$.

\label{subsec:mobility}

With the allocation recomputed every block, Fig.~\ref{fig:mobility} isolates the geometry over $250$ continuous tracks, each running through four heading segments of $1$~s ($10^4$ blocks, so nodes move up to $30$~m per segment), scored on $20$ blocks per segment.
Both metrics rise as the nodes approach and fall as they recede, and they change little in the tangential control segment, so range rather than speed drives performance. PI-UnSL keeps its joint-QoS advantage on every segment.

\label{subsec:scalability}

\begin{table}[!tb]
	\centering
	\caption{Per-realization Complexity Order}
	\label{tab:complexity}
	\footnotesize
	\begin{tabular}{@{}ll@{}}
		\hline
		Method & Order \\
		\hline
		PI-UnSL (one pass) & baseline $+\,O(P)$, $P$ fixed by the architecture \\
		Baseline & $O\big(U^3+U^2N_{\rm t}+UN_{\rm t}^2+N_{\rm r}^3$ \\
		& $\quad+\,L_{\rm cl}N_{\rm r}N_{\rm t}^2+L_{\rm cl}N_{\rm t}N_{\rm r}^2\big)$ \\
		Time division & same as the baseline \\
		Iterative solver & baseline $+\,O\big(I_{\rm out}N_{\rm r}^3+I_{\rm out}I_{\rm in}(U^2N_{\rm t}+UN_{\rm t}^2$ \\
		& $\quad+\,L_{\rm cl}N_{\rm r}N_{\rm t}^2+L_{\rm cl}N_{\rm t}N_{\rm r}^2)\big)$ \\
		\hline
	\end{tabular}
\end{table}

PI-UnSL is slower than the closed-form baseline, mainly because of its forward pass, but it needs no iterations. Beyond the baseline it builds as its prior, its cost is fixed by the architecture, whereas the AO cost scales with $I_{\rm out}I_{\rm in}$ (Table~\ref{tab:complexity}). The timing of PI-UnSL includes its prior and physics statistics and excludes only the per-block basis, which all designs share. Because the architectural constant grows with $N_{\rm t}$ and $L_{\rm cl}$, the margin narrows for larger arrays.

\section{Conclusion}
\label{sec:conclusion}

We considered joint beamforming and power allocation for a mobile multiuser MIMO-ISAC network with clutter, residual self-interference, and outdated CSI under per-realization rate and sensing constraints. A closed-form sensing SINR at the interference-aware combiner showed that the problem is intermittently infeasible with an analytic violation floor, against which we calibrated the label-free dual to keep its multipliers bounded. The proposed PI-UnSL refines the analytical design through exact normalization and closed-form interference statistics in one pass. Its advantage is per-realization reliability rather than mean throughput: it raises joint QoS over the closed-form baseline, delivers more rate than time-division ISAC at matched detection over most of the tradeoff, and exceeds the reliability of an uncalibrated iterative solver at a fraction of its latency, under mobility and, with a reduced margin, under outdated CSI. More broadly, the physical knowledge given to the network, rather than its architecture, made this problem learnable without labels, as five architectures failed alike from scratch and performed alike with it.

\appendices
\section{Proof of Theorem~\ref{thm:max_sensing}}
\label{app:proof_max_sensing}
\textit{Step 1 (affine form).} Fix $(\vec{w}_{\rm s},\{\vec{w}_u\},\vec{u}_{\rm s})$ with $\|\vec{u}_{\rm s}\|=1$. Substituting $\mathbf R_x=p_{\rm s}\vec{w}_{\rm s}\vec{w}_{\rm s}^{\mathsf H}+\sum_u p_u\vec{w}_u\vec{w}_u^{\mathsf H}$ into \eqref{eq:interference_covariance} and \eqref{eq:target_channel} into \eqref{eq:sensing_sinr} gives
\begin{equation}
    \gamma^{\rm s}=\frac{\Lambda p_{\rm s}}{D},\qquad
    D=c_{\rm s}p_{\rm s}+\sum_{u=1}^{U}c_up_u+\sigma_{\rm r}^2>0,
    \label{eq:app_affine}
\end{equation}
with $\Lambda=G_{\rm p}|\alpha_{\rm T}|^2|\vec{u}_{\rm s}^{\mathsf H}\vec{a}_{\rm r}|^2|\vec{a}_{\rm t}^{\mathsf H}\vec{w}_{\rm s}|^2\ge0$ and
\begin{equation}
\begin{split}
    c_{\rm s}&=\textstyle\sum_{\ell}|\vec{u}_{\rm s}^{\mathsf H}\mathbf H_{{\rm cl},\ell}\vec{w}_{\rm s}|^2
    +\chi_{\rm SI}|\vec{u}_{\rm s}^{\mathsf H}\mathbf H_{\rm SI}\vec{w}_{\rm s}|^2\ge0,\\
    c_u&=|\vec{u}_{\rm s}^{\mathsf H}\mathbf H_{\rm T}\vec{w}_u|^2
    +\textstyle\sum_{\ell}|\vec{u}_{\rm s}^{\mathsf H}\mathbf H_{{\rm cl},\ell}\vec{w}_u|^2\\
    &\quad+\chi_{\rm SI}|\vec{u}_{\rm s}^{\mathsf H}\mathbf H_{\rm SI}\vec{w}_u|^2\ge0,
\end{split}
\end{equation}
which collect the clutter and SI driven by the sensing stream and the target-reflected, clutter, and SI leakage driven by user $u$. Communication power thus enters only the denominator.

\textit{Step 2 (power split).} Move $\Delta\in(0,p_u]$ from user $u$ to sensing at fixed total power. The new SINR $\Lambda(p_{\rm s}+\Delta)/(D+(c_{\rm s}-c_u)\Delta)$ is at least $\Lambda p_{\rm s}/D$ if and only if $(p_{\rm s}+\Delta)D\ge p_{\rm s}(D+(c_{\rm s}-c_u)\Delta)$, that is,
\begin{equation}
    \textstyle\sum_{u'}c_{u'}p_{u'}+\sigma_{\rm r}^2+c_u p_{\rm s}\ge0,
\end{equation}
which always holds and is strict since $\sigma_{\rm r}^2>0$. Moreover, $\partial\gamma^{\rm s}/\partial p_{\rm s}=\Lambda(\sum_u c_up_u+\sigma_{\rm r}^2)/D^2\ge0$, so any unused budget should also go to sensing. Repeating the transfer over all users yields $p_{\rm s}=P_{\max}$ and $p_u=0$, for every choice of beams and combiner.

\textit{Step 3 (combiner).} At this split, $\mathbf R_x=P_{\max}\vec{w}_{\rm s}\vec{w}_{\rm s}^{\mathsf H}$, so $D=\vec{u}_{\rm s}^{\mathsf H}\mathbf R_{\rm in}^{({\rm s})}\vec{u}_{\rm s}$ with $\mathbf R_{\rm in}^{({\rm s})}\succeq\sigma_{\rm r}^2\mathbf I_{N_{\rm r}}\succ0$, and
\begin{equation}
    \gamma^{\rm s}=G_{\rm p}P_{\max}|\alpha_{\rm T}|^2|\vec{a}_{\rm t}^{\mathsf H}\vec{w}_{\rm s}|^2\,
    \frac{|\vec{u}_{\rm s}^{\mathsf H}\vec{a}_{\rm r}|^2}{\vec{u}_{\rm s}^{\mathsf H}\mathbf R_{\rm in}^{({\rm s})}\vec{u}_{\rm s}}.
    \label{eq:app_rayleigh}
\end{equation}
With $\vec{t}=(\mathbf R_{\rm in}^{({\rm s})})^{1/2}\vec{u}_{\rm s}$, the Cauchy--Schwarz inequality gives $|\vec{u}_{\rm s}^{\mathsf H}\vec{a}_{\rm r}|^2=|\vec{t}^{\mathsf H}(\mathbf R_{\rm in}^{({\rm s})})^{-1/2}\vec{a}_{\rm r}|^2\le\|\vec{t}\|^2\,\vec{a}_{\rm r}^{\mathsf H}(\mathbf R_{\rm in}^{({\rm s})})^{-1}\vec{a}_{\rm r}$, with equality if and only if $\vec{u}_{\rm s}\propto(\mathbf R_{\rm in}^{({\rm s})})^{-1}\vec{a}_{\rm r}$~\cite[Lemma~B.9]{Bjornson_2017}. The quotient in \eqref{eq:app_rayleigh} is therefore at most $\vec{a}_{\rm r}^{\mathsf H}(\mathbf R_{\rm in}^{({\rm s})})^{-1}\vec{a}_{\rm r}$, attained by the MVDR combiner, the solution of $\min_{\vec{u}}\vec{u}^{\mathsf H}\mathbf R_{\rm in}^{({\rm s})}\vec{u}$ subject to $\vec{u}^{\mathsf H}\vec{a}_{\rm r}=1$~\cite[Eqs.~(4)--(5)]{Elbir_SPM_2023} up to a scale that the quotient ignores. Since the sensing-only split is optimal for every combiner by Step~2, maximizing over the combiner afterwards preserves it.

\textit{Step 4 (beam).} For $\|\vec{w}_{\rm s}\|=1$, the Cauchy--Schwarz inequality gives $|\vec{a}_{\rm t}^{\mathsf H}\vec{w}_{\rm s}|^2\le\|\vec{a}_{\rm t}\|^2=N_{\rm t}$, with equality for $\vec{w}_{\rm s}=\vec{a}_{\rm t}(\phi_{\rm T})/\sqrt{N_{\rm t}}$. Substituting this beam and the MVDR combiner into \eqref{eq:app_rayleigh} yields \eqref{eq:gamma_s_max}. Because $\mathbf R_{\rm in}^{({\rm s})}$ depends on $\vec{w}_{\rm s}$ through the clutter and SI illumination, the matched beam maximizes the first factor of \eqref{eq:app_rayleigh} but need not maximize the quotient, so \eqref{eq:gamma_s_max} is the SINR of an achievable allocation rather than the global maximum over beams.

\textit{Noise-limited bound.} Since $\mathbf R_{\rm in}^{({\rm s})}\succeq\sigma_{\rm r}^2\mathbf I_{N_{\rm r}}$ implies $(\mathbf R_{\rm in}^{({\rm s})})^{-1}\preceq\sigma_{\rm r}^{-2}\mathbf I_{N_{\rm r}}$, it follows that $\vec{a}_{\rm r}^{\mathsf H}(\mathbf R_{\rm in}^{({\rm s})})^{-1}\vec{a}_{\rm r}\le N_{\rm r}/\sigma_{\rm r}^2$ and hence $\gamma^{\rm s,max}\le\gamma_{\rm NL}$ in \eqref{eq:noise_limited}. More generally, $D\ge\sigma_{\rm r}^2$, $p_{\rm s}\le P_{\max}$, $|\vec{u}_{\rm s}^{\mathsf H}\vec{a}_{\rm r}|^2\le N_{\rm r}$, and $|\vec{a}_{\rm t}^{\mathsf H}\vec{w}_{\rm s}|^2\le N_{\rm t}$ in \eqref{eq:app_affine} give $\gamma^{\rm s}\le\gamma_{\rm NL}$ for every allocation, beam, and combiner, which is the bound behind Corollary~\ref{cor:range}.
\qed

\let\oldthebibliography\thebibliography
\renewcommand{\thebibliography}[1]{%
  \oldthebibliography{#1}%
  \setlength{\itemsep}{0pt}\setlength{\parsep}{0pt}}
\bibliographystyle{IEEEtran}
\bibliography{references}

\end{document}